\newcommand{\ignore}[1]{}%
\newcommand{\abs}[1]{|#1| }
\let\emptyset\varnothing
\theoremstyle{plain}
\newtheorem{theorem}{Theorem}
\newtheorem*{theorem*}{Theorem}
\newtheorem{lemma}[theorem]{Lemma}
\newtheorem{claim}[theorem]{Claim}
\newtheorem{corollary}[theorem]{Corollary}
\theoremstyle{definition}
\newtheorem{hypothesis}{Hypothesis}
\newtheorem{oq}{Open Question}
\theoremstyle{remark}
\newtheorem*{remark}{Remark}
\newcommand{\ED}{\operatorname{ED}}
\newcommand{\LCS}{\operatorname{LCS}}
\newcommand{\acost}{\operatorname{cost}}
\newcommand{\AG}{\operatorname{GA}}
\newcommand{\NVG}{\operatorname{NVG}}
\newcommand{\dist}{\operatorname{dist}}
\newcommand{\pretime}{t}
\newcommand{\eps}{\epsilon}
\newcommand{\cA}{\mathcal{A}}
\newcommand{\cB}{\mathcal{B}}
\newcommand{\cC}{\mathcal{C}}
\newcommand{\cE}{\mathcal{E}}
\newcommand{\cH}{\mathcal{H}}
\providecommand{\cL}{\mathcal{L}}
\newcommand{\cN}{\mathcal{N}}
\newcommand{\cW}{\mathcal{W}}
\newcommand{\cAs}{{\cA_{\textsc{Sparse}}}}
\newcommand{\cAb}{{\cA_{\textsc{Bad}}}}
\newcommand{\cAc}{{\cA_{\textsc{Covered}}}}
\newcommand{\cBt}{{\cB_{\tau}}}
\newcommand{\set}[1]{\{#1\}}
\newcommand{\anext}{a.\text{next}}
\newcommand{\aprev}{a.\text{prev}}
\newcommand{\iprev}{i.\text{prev}}
\newcommand{\jprev}{j.\text{prev}}
\renewcommand{\R}{\mathbb{R}}
\newcommand{\ttau}{t_\tau}
\title{Does Preprocessing help in Fast Sequence Comparisons?}
\author{
  Elazar Goldenberg%
  \thanks{The Academic College of Tel Aviv-Yaffo. 
    Email: \texttt{elazargo@mta.ac.il}
  }
  \and
  Aviad Rubinstein%
  \thanks{Stanford University.
    Email: \texttt{aviad@cs.stanford.edu}
  }
  \and
  Barna Saha%
  \thanks{University of California Berkeley. Work partially supported by an 
  NSF CAREER Award 1652303, NSF HDR TRIPODS Grant 1934846 and an Alfred P. Sloan Fellowship.
    Email: \texttt{barnas@berkeley.edu}
  }
}
\date{}
\begin{document}

\maketitle
\thispagestyle{empty}
\begin{abstract}

We study edit distance computation with preprocessing: the preprocessing algorithm acts on each string separately, and then the query algorithm takes as input the two preprocessed strings.
This model is inspired by scenarios where we would like to compute edit distance between many pairs in the same pool of strings. 

Our results include:
\begin{description}
\item[Permutation-LCS] If the LCS between two permutations has length $n-k$, we can compute it {\em exactly} with $O(n \log(n))$ preprocessing and $O(k \log(n))$ query time.
\item[Small edit distance] For general strings, if their edit distance is at most $k$, we can compute it {\em exactly} with $O(n\log(n))$ preprocessing and $O(k^2 \log(n))$ query time. 
\item[Approximate edit distance] For the most general input, we can approximate the edit distance to within factor $(7+o(1))$ with preprocessing time $\tilde{O}(n^2)$ and query time $\tilde{O}(n^{1.5+o(1)})$. 
\end{description}

All of these results significantly improve over the state of the art in edit distance computation without preprocessing. Interestingly, by combining ideas from our algorithms with preprocessing, we provide new improved results for approximating edit distance without preprocessing in subquadratic time.

\end{abstract}

\pagebreak

\section{Introduction}

Edit distance (aka Levenshtein distance)~\cite{Lev65} and longest common subsequence are widely used 
distance measures between pairs of strings, over some alphabet $\Sigma$. They find applications in several 
fields like computational biology, pattern recognition, text processing, information retrieval and many more. 
The edit distance between $A$ and $B$, denoted by $\ED(A,B)$, is defined as the minimum number of character 
insertions, deletions, and substitutions needed for converting $A$ into $B$. The longest common subsequence
of $A$ and $B$, denoted by $\LCS(A,B)$, is defined as the longest subsequence common to $A$ and $B$. 
A simple dynamic program solves this problem in quadratic time. 
Moreover under reasonable hardness assumptions like SETH and BP-SETH no real subquadratic time algorithm for 
these problems exists~\cite{BK15,BI18,AHWW16,AB18}.

%
While dealing with huge amounts of data (such as DNA chains, enormous storage,
etc.), quadratic-time algorithms are unaffordable.
This raised an active and extensive line of work on moving from quadratic-time exact computation towards (near)-linear time for approximation algorithms~\cite{BEKMRRS03,BJKK04,BES06,AO12,AKO10,BEGHS18,CDGKS18,CGKK18,Andoni19,BR19,KS19,HSSS19,HRS19,RSSS19,RS20}, and even designing {\em sub}-linear time algorithms for special cases such as restriction on the distance between the input sequences~\cite{BEKMRRS03, AO12,GKS19} or permutations~\cite{CK06,AN10,SS17, NRRS19,BCLW19,RSSS19}.

In many of these applications, a large number of very long strings from a database must be compared among each other (such as comparative genomics, comparing text corpora for documents similarity etc.). For example, in {\em string similarity join}, which is a fundamental problem in databases, one needs to find all pairs of strings (e.g., genome sequences) in a database that are close with respect to edit distance \cite{BZ16}.This in particular motivates developing sub-linear time algorithms. 
But, unfortunately even under strong assumptions, the known guarantees for sub-linear time algorithms (including recent works by the authors) are unsatisfactory. For example, recent work~\cite{GKS19} requires $\Theta(\frac{n}{k}+k^3)$-time, and with a highly non-trivial algorithm can barely distinguish between edit distance $k$ and $k^2$. Even when the strings are both permutations and $k$-close to each other, \cite{AN10}'s nearly-optimal algorithm runs in time $\tilde{O}(\frac{n}{k} + \sqrt{n})$ and still only approximates the edit distance (to within some large constant factor).
In part this is due to strong lower bounds: for example, when the edit distance is $k \ll n$, in order to have any chance of observing any difference between the strings, the algorithm must see $\Omega(\frac{n}{k})$ characters. 

{\bf Our main contribution is a simple and natural augmentation to the standard model: {\em preprocessing}.} Formally, we consider two parties that preprocess each input string independently, and then in a query phase they jointly (approximately) compute an optimal alignment.
Because the preprocessing of the two strings is done independently, 
(i) the same preprocessing of one string can be useful for many comparisons, and (ii) the preprocessing step can be fully parallelized in any distributed system.

In this paper we raise the question of whether preprocessing the input can accelerate the computation 
of the edit distance between input strings and computing their longest common subsequence. 
We affirmatively answer these questions by providing several algorithms that beat the state of 
the art algorithms where no preprocessing is allowed. Our results include faster algorithms for the tasks of exact computation 
of edit distance and permutation LCS. We also provide a better trade off between running time and approximation factor
for edit distance approximation.

We note in particular that when the preprocessing runs in near-linear time (as is the case with all our sublinear-time algorithms), it is essentially for free in the sense that it is barely more than it took to record and store the inputs in the first place. Even when preprocessing takes super-linear time, it could be much more cost-effective to have it when dealing with large number of strings. Preprocessing captures a middle ground between (i) aforementioned works on (approximate) edit distance between two long strings; and (ii) works on approximate closest pair or nearest neighbor among a large number of short strings~\cite{ADGIR03,Indyk04,CK06,OR07,LDHO14,ARW17,Rubinstein18,CGLRR19}. Our preprocessing algorithms are most appealing when both the length and number of strings are large. 

Preprocessing is also closely related to sketching \cite{BZ16,BJKK04}. With an efficient sketching algorithm, we can preprocess a string to compute a small-sized sketch and then only compare the sketches during querying. The state of the art result in edit distance sketching has a preprocessing time of $\tilde{O}(nk^2)$ and query time of $poly(k\log{n})$ \cite{BZ16}. Our algorithms get significantly better trade-offs. There are numerous works on related but different models such as computing embedding of edit distance \cite{ADGIR03,OR07,CK06,CGK16}, document exchange protocols \cite{J12,BZ16,H19} and error-correcting codes for insertions and deletions \cite{BGZ16,HRS19,H19}.



\subsection*{Contributions.}
In the preprocessing model we provide much faster and simpler
algorithms that output much better alignments:
\begin{description}
\item[Permutation-LCS] If the LCS between two permutations has length $n-k$, we can compute it {\em exactly} with $O(n \log(n))$ preprocessing and $O(k \log(n))$ query time. Contrast this result with~\cite{AN10} where in $\tilde{O}(\frac{n}{k} + \sqrt{n})$, the ulam distance can be approximated to within a large constant factor.
\item[Small edit distance] For general strings, if their edit distance is at most $k$, we can compute it {\em exactly} with $O(n\log(n))$ preprocessing and $O(k^2 \log(n))$ query time. Contrast this result with~\cite{GKS19} where in $\tilde{O}(\frac{n}{k} + k^3)$ time, one can distinguish if edit distance is below $k$ or above $\Theta(k^2)$.
\item[Approximate edit distance] For the most general input, we can approximate the edit distance to within factor $(7+o(1))$ with preprocessing time $O(n^2 \log(n))$ and query time $O(n^{1.5+o(1)})$. Contrast this result with~\cite{Andoni19} where a $f(\eps)$-approximation for edit distance can be computed in time $O(n^{1.5+\eps})$ ($f(\eps)$ goes to infinity as $\eps$ decreases).
\item[What if we only preprocess one string?] This setting is much harder, but we can still beat state of the art without preprocessing, namely distinguish $k$ vs $3k^2$ with $\tilde{O}(n)$ preprocessing and $\tilde{O}(n/k + k^2)$ query time.
\end{description}

These strong improvements run contrary to the fine-grained complexity rule of thumb that preprocessing inputs does not help~\cite{WW18}. We also formalize a few conditional hardness results establishing limitations of preprocessing for fast string alignment:
\begin{description}
\item[Exact alignment] We show that assuming (BP)-SETH, even after arbitrary polynomial-time preprocessing, computing edit distance or LCS exactly requires near-quadratic query time. 
\item[Approximate edit-distance] We show that if we can $\alpha$-approximate edit distance in truly-subquadratic query time  with arbitrary polynomial preprocessing, then we can also $(\alpha+o(1))$-approximate it in truly-subquadratic time without preprocessing (currently not known for any $\alpha < 3$). 
\end{description}
We remark that another related hardness result is known for the case where we only preprocess one string: Abboud and Vassilevska-Williams show that, assuming a nonuniform variant of SETH, even polynomial {\em space} (exponential time) preprocessing doesn't help to break the near-quadratic time barrier~\cite{AW19}.
\begin{description}
\item[Approximate edit-distance without preprocessing] Interestingly, using our algorithms {\em with} preprocessing (for small and large edit distance regime), we give the fastest algorithm for approximating edit distance within $3+\eps$ approximation {\em without} preprocessing. Our algorithm runs in $\tilde{O}(n^{1.6+o(1)})$ time whereas the best running time so far was $\tilde{O}(n^{1.69+o(1)})$ \cite{Andoni19}.
\end{description}
\subsection*{More context on our results}
Below we explain how the parameters in our results compare to existing literature without preprocessing.
We note that another feature of our algorithms is that they are all relatively simple. Even our most technically involved contribution, the  algorithm for general edit distance, is significantly simpler than related literature (e.g.~\cite{BR19,KS19,RSSS19}).

\paragraph{Permutation-LCS} Our $O(k \log(n))$ query time is most closely related to (and inspired by) the classic $O(n \log(n))$ for longest increasing subsequence (LIS) without preprocessing. Note that for exact computation, even after arbitrary preprocessing $\Omega(k)$ bits of communication are necessary, so our running time is tight up to the $\log(n)$ factor. Contrasting to~\cite{AN10}, we get exact result as opposed to approximation and significantly better query time bounds for $k=O(\sqrt{n})$.
\paragraph{Small edit distance}
Our $O(k^2 \log(n))$-time algorithm is most closely related to (and inspired by) a classic $\tilde{O}(n+k^2)$-time algorithm without preprocessing. Note that our near-$n^2$ SETH-lower-bound for general edit distance with preprocessing extends to $k^2$  SETH-lower-bound by a trivial padding argument (see also~\cite{BK18}). Hence our running time is near-tight assuming (BP)-SETH. Contrasting to~\cite{GKS19}, we again get exact result and better query time bound for all regimes of $k$, even when we allow single string preprocessing.

\begin{table}[ht]
	\centering
	\caption{Taxonomy of Algorithms Approximating Edit Distance}
	\label{TaxomonyTable}
	\begin{tabular}{|l|l|l|l|l|}
		\hline
		Authors        & Time  & Approximation Factor & comments                 \\ \hline
		
		\cite{CDGKS18}   & $O(n^{1.714}) $ & $3+\eps$~\footnote{The original paper~\cite{CDGKS18} reported an approximation factor of $1680$, but the authors confirmed that the approximation factor can be brought down to $3+\eps$.}& \\ \hline
		\cite{Andoni19}   & $\tilde{O}(n^{1.69})$ & $3+\eps$&
		\\ \hline
		{\bf This paper}   & {\boldmath $\tilde{O}(n^{1.6+\eps}) $} & {\boldmath $3+\eps$} &  \\ \hline
		\cite{Andoni19}   & $\tilde{O}(n^{1.5+\eps}) $ & $f_1(\delta)$ & \\ \hline
		{\bf This paper}   & {\boldmath $\tilde{O}(n^{1.5+\eps}) $} & {\boldmath $7+\eps$} & {\bf \boldmath using $ \tilde{O}(n^2)$-time preprocessing} \\ \hline
		\cite{KS19,BR19}   & $\tilde{O}(n^{1+\delta}) $ & $f_2(\delta)$ & $+ n^{1-f_3(\delta)}$ additive error\\ \hline
		
	\end{tabular}
\end{table}

\paragraph{Approximate edit distance} 
This result is most closely related to (and inspired by) recent subquadratic time approximation algorithms for edit distance~\cite{BEGHS18,CDGKS18,Andoni19,BR19,KS19}.
Here, the state of the art results include a $(3+\eps)$-approximation in $\tilde{O}(n^{12/7})$ time~\cite{CDGKS18} and later improvement to time $\tilde{O}(n^{1.69})$~\cite{Andoni19}, $f(\eps)$-approximation in time $O(n^{1.5+\eps})$~\cite{Andoni19}, or $f'(\eps)$-approximation in time $O(n^{1+\eps})$ when the true edit distance is large~\cite{BR19,KS19} (here $f,f'$ are functions that go to infinity as $\eps$ decreases).
While the improvement is not as dramatic as for sublinear algorithms, after near-quadratic preprocessing, our algorithm is clearly faster than~\cite{CDGKS18,Andoni19} ($n^{1.5}$ vs $n^{1.69}$), while obtaining much better approximation guarantees than~\cite{Andoni19,BR19,KS19} ($7+\eps$ vs $f(\eps)$). 
Interestingly, this algorithm combines ideas from aforementioned recent advances on approximate edit distance computation \cite{BEGHS18,CDGKS18,Rub18-blog,BR19}, together with our algorithm for small edit distance computation with preprocessing. Even more surprisingly, by combining ideas from our algorithms with preprocessing, we design the fastest $3+\eps$ approximation algorithm for edit distance without any preprocessing.


%



\subsection{Open problems}
We now describe a couple of exciting directions for future work

\paragraph{Preprocess one string:}
An appealing variant of our preprocessing model is when only one of the string is preprocessed. (This is motivated by a scenario where a single reference string is compared to many strings that are only used once.)
For sublinear algorithms, we are able to get some improvement over state of the art, but the $\Omega(n/k)$ lower bound from communication complexity continues to hold here.
With subquadratic algorithms on the other hand, our preprocessing algorithm has a natural variant that could be applied to only one string.
But so far we are unable to use it to obtain significant improvement over no-preprocessing approximate edit distance algorithms.
\begin{oq}
What is the complexity of approximate edit distance after preprocessing one of the strings?
\end{oq}

\paragraph{Approximate edit distance in sub-linear time}
A natural question is whether we can combine ideas from our exact $\tilde{O}(k^2)$-time algorithm for small edit distance together with the $O(n^{1.5+o(1)})$-time approximation algorithm for general edit distance to approximate small edit distance in truly sub-$k^2$ time.
Alternatively, it may be possible to show unconditional lower bounds (e.g. via communication complexity) for approximate edit distance in this regime.

\begin{oq}
What is the complexity of approximate edit distance with preprocessing when $k \ll n$?
\end{oq}

\paragraph{Beyond string alignment?}

As discussed before, preprocessing is particularly appealing when it runs in near-linear time and the queries run in sub-linear time.
In the context of string alignment, there is a very natural notion of preprocessing where each string is preprocessed separately.
An interesting, open-ended direction is to identify other problems in sub-linear algorithms where one can define preprocessing models that are both natural and allow for significant improvements.

\begin{oq} Define preprocessing models for other problems in sub-linear algorithms that are both natural and allow for significant improvements.
\end{oq}

\section{Small Ulam distance}
\label{sec:ulam}




In this section, we prove Theorem~\ref{theorem:lcs} where with preprocessing we can compute ulam distance (bounded by $k$) exactly in time $O(k\log(n))$.
\begin{theorem}[Permutation-LCS] 
\label{theorem:lcs}
Given two permutations $X,Y$ of $\{1,\dots,n\}$ with a common string of length at least $n-k$, 
we can compute their LCS exactly with $O(n\log(n))$-time preprocessing and $O(k\log(n))$-time joint processing.
\end{theorem}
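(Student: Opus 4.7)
Proof plan: The natural reduction converts LCS of permutations to the longest increasing subsequence (LIS) of the derived sequence $\pi(i) := \operatorname{pos}_X(Y[i])$. In the preprocessing phase I would, for each string independently, build in $O(n\log n)$ time a balanced structure (a sorted inverse array augmented with a balanced BST or a wavelet tree) that supports value-to-position and position-to-value lookups in $O(\log n)$ time. Once the joint phase begins, any entry $\pi(i)$ can therefore be retrieved in $O(\log n)$ time without ever materializing the whole sequence.

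The key structural observation is that $\operatorname{LIS}(\pi) \geq n-k$ forces $\pi$ to have at most $k$ descents and hence to decompose into at most $k+1$ maximal increasing runs. I would compute the LIS by a \emph{run-batched} variant of patience sorting: maintain the usual sorted list of pile tops, and, given the next maximal increasing run $r_1 < r_2 < \dots < r_m$, splice it into the piles with a single binary search (the first element $r_1$ locates a position $j$ among the pile tops, and because the run is increasing, the whole run replaces the suffix of the pile-top sequence starting at $j$). Each such splicing step is $O(\log n)$ work, so processing the $O(k)$ runs costs $O(k\log n)$ once the run boundaries are known.

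The hard part, and the main obstacle I anticipate, is to locate the next descent of $\pi$ past a given position $i$ using only $O(\log n)$ oracle accesses to $\pi$ (so that the total query cost stays $O(k\log n)$). My plan is to strengthen the preprocessing to a balanced range tree over positions on each of $X$ and $Y$, with each node storing aggregated information about the values in its range (e.g.\ sorted value lists or min/max). At query time, traverse the two trees in parallel in $Y$-position order: at each paired node check, using the aggregated data, whether the current $Y$-range embeds into $X$ with strictly increasing positions; if so prune the entire subtree, and otherwise recurse into its children. A charging argument should show that each of the $\leq k$ descents touches only an $O(\log n)$ root-to-leaf path, for $O(k \log n)$ nodes visited and $O(k \log n)$ total work, which is then combined with the run-batched patience sort above.

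If the parallel-traversal descent oracle turns out to be too delicate to support purely from independent per-string preprocessing, a fallback is a Landau--Vishkin-style design: replace the "LCE" primitive by a "longest monotone extension" query that, given a current position $p$ in $X$ and a current position $j$ in $Y$, returns the largest $\ell$ such that $\pi(j),\pi(j+1),\dots,\pi(j+\ell-1)$ is increasing and all entries exceed $p$. Implementing this primitive in $O(\log n)$ amortized time, using only per-string preprocessing, is the technical kernel; given it, the patience-sort-by-runs outer loop completes the algorithm within the claimed bounds.
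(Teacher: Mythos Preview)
Your structural claim---that $\operatorname{LIS}(\pi)\ge n-k$ forces at most $k$ descents and hence at most $k+1$ increasing runs---is correct. The fatal gap is the ``run-batched patience sorting'': it is not true that an increasing run can be spliced into the pile-top list with one binary search. With pile tops $(1,3,4,8)$ and incoming run $(2,6)$, correct patience sorting gives $(1,2,4,6)$, whereas your splice (under either reading of ``replace the suffix starting at $j$'') gives $(1,2,6,8)$ or $(1,2,6)$. Concretely, on $\pi=(1,3,4,8,2,6,5,7)$---three runs, true $\operatorname{LIS}=5$---your procedure outputs $4$. The reason is that consecutive run elements need not land on consecutive piles: where $r_{i+1}$ lands depends on how it interleaves with the old pile tops that survived past $r_i$'s pile. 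More globally, the pile-top list has size $n-k$, and there is no evident $o(n)$-time way to maintain it across $O(k)$ run insertions. (Your descent-finding oracle is also never actually constructed---both the parallel range-tree traversal and the ``longest monotone extension'' primitive are left open---but that is secondary.)

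The paper avoids both difficulties by decomposing into maximal \emph{identical substrings} rather than increasing runs: blocks $X[i\,..\,j]=Y[i'\,..\,j']$, equivalently maximal intervals on which $\pi$ takes consecutive integer values, not merely increasing ones. There are still $O(k)$ such blocks. The preprocessing is just rolling hashes of each string at all power-of-two window lengths, $O(n\log n)$ per string; at query time each maximal block is located by one doubling-plus-binary-search over hash equalities, $O(\log n)$ per block and $O(k\log n)$ total. Because disjoint $X$-blocks occupy disjoint $Y$-intervals (permutations), the LCS equals the Heaviest Increasing Subsequence of these $O(k)$ blocks (in $X$-order, keyed by $Y$-start, weighted by length), solvable in $O(k\log k)$ by the standard weighted-LIS routine on $O(k)$ items---so no $\Theta(n)$-size state ever appears in the query phase.
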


\begin{claim}[Structure of close permutations] \label{claim:LIS}
If two permutations $X,Y$ of $\{1,\dots,n\}$ share a common string of length at least $n-k$,
then they can be partitioned into $O(k)$ contiguous blocks such that each block of $Y$ has an identical block in $Y$.
\end{claim}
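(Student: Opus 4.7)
The plan is to exploit the structure of a longest common subsequence $L$ of $X$ and $Y$ of length at least $n-k$ (note: I read the statement as saying each block of $X$ has an identical block in $Y$, correcting an apparent typo). Call the $n-k$ characters appearing in $L$ \emph{matched} and the remaining at most $k$ characters \emph{unmatched}. Since $X$ and $Y$ are permutations of the same set $\{1,\dots,n\}$, the set of unmatched characters is the \emph{same} $k$-element subset of $\{1,\dots,n\}$ in both strings, so I can use each unmatched character as a singleton block that is trivially identical in $X$ and $Y$.

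Next I would look at the matched characters. In $X$, they form maximal contiguous runs separated by unmatched characters; since there are at most $k$ unmatched characters, there are at most $k+1$ such runs. Each run, read in order, is a contiguous sub-block of $L$, so it induces a partition $L = L_1^X L_2^X \cdots L_a^X$ with $a \leq k+1$. Repeating the construction on $Y$ gives $L = L_1^Y L_2^Y \cdots L_b^Y$ with $b \leq k+1$.

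I would then take the common refinement of these two partitions of $L$, i.e., the partition whose break points are the union of the break points coming from $X$ and from $Y$. This yields at most $a + b - 1 \leq 2k+1$ pieces $M_1,\dots,M_c$ of $L$. By construction each $M_i$ sits inside a single $L_j^X$ (and so appears as a contiguous substring of $X$), and also inside a single $L_{j'}^Y$ (and so appears as a contiguous substring of $Y$); moreover $M_i$ contains exactly the same characters in the same order in both strings, so it is an identical block in $X$ and $Y$. Combining the at most $2k+1$ matched blocks with the at most $k$ singleton unmatched blocks gives a partition of each of $X$ and $Y$ into at most $3k+1 = O(k)$ contiguous blocks with the required matching property.

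I don't anticipate any real obstacle. The only step that requires a line of justification is the bound $c \leq a+b-1$ on the common refinement, which follows from the fact that merging two partitions of a linear order with $a$ and $b$ parts can introduce at most $(a-1)+(b-1)$ internal break points in total. Everything else is bookkeeping on matched vs.\ unmatched positions.
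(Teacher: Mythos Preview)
Your proof is correct and follows essentially the same approach as the paper: both arguments partition the common subsequence into at most $k+1$ pieces contiguous in $X$, at most $k+1$ pieces contiguous in $Y$, and take the common refinement to obtain at most $2k+1$ blocks contiguous in both. The paper's proof stops there (leaving the unmatched characters implicit in the $O(k)$ count), whereas you explicitly handle them as singleton blocks to reach the bound $3k+1$; this is a minor presentational difference, not a different method.
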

\begin{proof}
The shared common string can be partitioned into at most $k+1$ blocks that are contiguous for $X$,
and similarly for $Y$. The coarsest refinement of both partitions is contiguous on both $X$ and $Y$ and uses at most $2k+1$ blocks.
\end{proof}

\subsubsection*{Algorithm description}
The preprocessing algorithm (Algorithm~\ref{alg:hash}) constructs $\log(n)+1$ hash tables.
The $\ell$-th hash table corresponds to window size $2^{\ell}$;
we use a rolling hash function (e.g. Rabin fingerprint) to construct a hash table 
of all contiguous substrings of $X$ of length $2^{\ell}$ in time $O(n)$.

Algorithm~\ref{alg:compress} finds the partition into blocks guaranteed in Claim~\ref{claim:LIS}.
At each iteration of the algorithm, it finds the longest contiguous substring of $X$,
starting from XStart that has an identical contiguous substring in $Y$. 
Using the prestored hashes, this is done in time $O(\log(n))$.

Finally, given the partition into blocks, we just have to solve a heaviest increasing substring problem on the $O(k)$ blocks (with weights corresponding to block lengths).
This can be done in time $O(k\log(k))$ using a standard generalization of the classic LIS algorithm (e.g.~\cite{JV92}).
We provide pseudocode in Algorithm~\ref{alg:HIS} for completeness.

In the pseudocode below we sometimes abuse notation and think of X,Y as functions from indices to characters, and similarly, we use Y$^{-1}$ to denote the inverse of this function (i.e. given a character it returns its index in $Y$.

\begin{algorithm}\label{alg:hash}\small
\caption{\small $\textsc{Preprocess}$($X$)}
$n \gets \text{length}(X)$\\
\For{\upshape $\ell = 0 \dots \log(n)$}
{
	$H$[$\ell$] $\gets$ Rolling hash of $X$ with window of length $2^{\ell}$
}
\Return $H$
\end{algorithm}

\begin{algorithm}\small
\TitleOfAlgo{\small $\textsc{Compress}$($X,H_X,Y,H_Y$)}
\caption{Algorithm \textsc{Compress} iteratively finds maximal blocks [XStart...XEnd] in X that have a matching maximal block [YStart...YEnd] in Y. At each iteration it first exponentially increases the variable $\ell$ until $\ell := \lfloor \log_2(\text{XEnd}-\text{XStart}) \rfloor$; it then binary searches for the exact length of the block.}
 \label{alg:compress}
$n \gets \text{length}(X)$\\
XStart  $\gets 0$\\
XBlocks $\gets \emptyset$\\
\While{\upshape XStart  $ < n$}
{
   YStart $\gets Y^{-1}(X(\text{XStart}))$ \tcp*{XStart,YStart = respective starts of next block}
	$\ell \gets 1$\\
	\While{\upshape $\ell < \log(n)$}
	{
		\If{\upshape $H_X$[$\ell$][XStart] $\notin H_Y$[$\ell$]}
			{\Break}
		$\ell \gets \ell +1$
	}
	XEnd $ \gets \text{XStart }+2^{\ell}$\\
	YEnd $ \gets \text{YStart }+2^{\ell}$\\
	\While{$\ell > 0$}
	{
		$\ell \gets \ell -1$\\
		\If{\upshape $H_X$[$\ell$][XEnd] == $H_Y$[$\ell$][YEnd]}
		{
			XEnd $ \gets \text{XEnd }+2^{\ell}$\\
			YEnd $ \gets \text{YEnd }+2^{\ell}$\\
		}
	}
	XBlocks $\gets \text{XBlocks} \cup (\text{XStart},\text{XEnd}-\text{XStart})$\\
	$\text{XStart} \gets \text{XEnd} +1$\\
}
\Return XBlocks
\end{algorithm}

\begin{algorithm}\label{alg:HIS}\small
\TitleOfAlgo{\small $\textsc{HIS}$(XBlocks,Y)}
\caption{Algorithm HIS maintains data structure (balanced binary search tree) Pareto, which stores the total weight and Y-index of the last character of each common substring of X and Y. The data structure is maintained sorted by Y-index, and we only keep common substrings that are {\em pareto-optimal} (in the sense that we want common substrings that are heavier but end on lower Y-index).}
$k \gets \text{length}(\text{XBlocks})$\\
Pareto $\gets$ new balanced binary search tree\\
Pareto.insert($0,0$)\\
\For{\upshape $i= 1\dots k$}
{
	\tcc{Add the next block to Pareto:}
	newY $\gets Y^{-1}(\text{Xblock[$i$].start})$\\
	prevY $\gets$ Pareto.prev(newY).Y\\
	prevWeight $\gets$ Pareto.prev(newY).weight\\
	newWeight $\gets \text{prevWeight} + \text{Xblock.weight}$\\
	Pareto.insert(newY,newWeight)\\
	\tcc{Remove old blocks that are no longer pareto-optimal:}
	\While{\upshape newWeight $\geq$ Pareto.next(newY).weight}
	{
		Pareto.next(newY).delete()
	}
}
\Return Pareto.max().weight
\end{algorithm}

\newpage
\section{Small Edit Distance}
\label{sec:small}
In this section, we prove our result on small edit distance, when the edit distance is bounded by $k$. In particular, we prove Theorem~\ref{thm:small-edit}.
\begin{theorem}[Small-EDIT]\label{thm:small-edit}
Given two strings $A=a_1a_2..a_n$ and $B=b_1b_2..b_n$ of length $n$ over alphabet $\Sigma$, and a bound on their edit distance, $\ED(A,B) \leq k$, we can compute their edit distance exactly with $O(n\log{(n)})$-time preprocessing and $O(k^2\log{(n)})$-time joint processing.
\end{theorem}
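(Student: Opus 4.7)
My plan is to adapt the classical Landau--Vishkin $O(n+k^2)$-time algorithm for bounded edit distance to the separate-preprocessing model. Landau--Vishkin maintains, for each diagonal $d \in [-k,k]$ and each edit budget $e \in [0,k]$, the value $L[d,e]$ equal to the largest $i$ such that $\ED(A[1..i],\,B[1..i+d]) \le e$. The recurrence is
\[
L[d,e] \;=\; \max\!\bigl(L[d-1,e-1],\; L[d,e-1]+1,\; L[d+1,e-1]+1\bigr) \;+\; \mathrm{LCE}_{A,B}(i+1,\,i+d+1),
\]
where $\mathrm{LCE}_{A,B}(p,q)$ is the longest common extension (the largest $\ell$ with $A[p..p+\ell-1]=B[q..q+\ell-1]$) and $i$ is the max above. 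There are $O(k^2)$ entries, each requires one LCE query plus $O(1)$ work, and $\ED(A,B)$ is the smallest $e$ for which $L[n-n,e]\ge n$ (or $\infty$ if no such $e\le k$ exists).

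The obstacle is that the usual $O(1)$-time LCE oracle is built on a suffix tree/array of $A\#B$, which couples the two inputs and so is forbidden in our model. I would replace it with a Karp--Rabin substring-hash oracle that factorizes across $A$ and $B$: fix public parameters (a base $r$ and a random prime $p$ of $\Theta(\log n)$ bits), and in preprocessing compute, independently for each string, either the prefix hashes $h^A_i = \sum_{j\le i} A[j]\,r^{i-j} \bmod p$, or (mirroring Algorithm~\ref{alg:hash}) a rolling hash table for every power-of-two window. Either version runs in $O(n)$ or $O(n\log n)$ time and lets any substring hash be retrieved in $O(1)$. An LCE query is then answered by binary-searching the largest $\ell\in[0,n]$ for which the hashes of $A[p..p+\ell-1]$ and $B[q..q+\ell-1]$ agree, costing $O(\log n)$ per query and hence $O(k^2\log n)$ for all of Landau--Vishkin.

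The remaining subtleties are: (i) correctness of the hash-based LCE, which follows from the standard Karp--Rabin bound --- choosing $p$ uniformly from primes of $\Theta(\log n)$ bits makes each of the at most $O(k^2\log n)\le \mathrm{poly}(n)$ substring-equality checks err with probability $1/\mathrm{poly}(n)$, so a union bound gives overall high-probability correctness (and one can re-run with fresh $p$ for amplification); (ii) the separate-preprocessing constraint on parameters, which is satisfied because $r$ and $p$ are public and require no communication; and (iii) handling $\ED(A,B)>k$, which is detected when no diagonal achieves $L[d,e]\ge n$ for any $e\le k$, in which case the algorithm reports ``distance exceeds $k$.'' The only genuinely delicate part is bookkeeping Landau--Vishkin when $L[d,e]$ falls off either end of $A$ or $B$, but this is handled by capping at $n$ exactly as in the classical algorithm.
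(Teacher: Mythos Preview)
Your proposal is correct and follows essentially the same approach as the paper: both adapt the Landau--Vishkin diagonal-wave recurrence, replacing the suffix-tree-based LCE oracle by Karp--Rabin/rolling hashes computed separately on each string, and answer each LCE (the paper's $Equal(i,d)$) in $O(\log n)$ time via binary search over the hashes. The only minor differences are that the paper uses an exponential-then-binary search and is silent about the Monte Carlo error, whereas you explicitly address the randomness; neither changes the argument.
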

We first recall an algorithm developed in~\cite{Ukkonen85,LMS98,LV88,Myers86} that computes edit distance in $O(n+k^2)$ time.

\subsubsection*{Warm-up: An $O(n+k^2)$ algorithm for Edit Distance.} The well-known dynamic programming algorithm computes an $(n+1) \times (n+1)$ edit-distance matrix $D[0...n][0...n]$ where entry $D[i,j]$ is the edit distance, $\ED(A^i,B^j)$ between the prefixes $A[1,i]$ and $B[1,j]$ of $A$ and $B$, where $A[1,i]=a_1a_2...a_i$ and $B[1,j]=b_1b_2...b_j$. The following is well-known and easy to verify coupled with the boundary condition $D[i,0] = D[0,i] = i$ for all $i \in [0,n]$. 

For all $i, j \in [0,n]$
\[ D[i,j] = \min \left\{ \begin{array}{ll}
         D[i-1,j]+1 & \mbox{if $i > 0$};\\
        D[i,j-1]+1 & \mbox{if $j > 0$};\\
        D[i-1,j-1]+\mathds{1}(a_i\neq b_j) & \mbox{if $i,j > 0$}.
        \end{array} \right. \]

The computation cost for this dynamic programming is $O(n^2)$. To obtain a significant cost saving when $\ED(A,B) \leq k <<n$, the $O(n+k^2)$ algorithm works as follows. It computes the entries of $D$ in a greedy order, computing first the entries with value 0, $1,2,...k$ respectively. Let diagonal $d$ of matrix $D$, denotes all $D[i,j]$ such that $j=i+d$. Therefore, the entries with values in $[0,k]$ are located within diagonals $[-k,k]$.  Now since the entries in each diagonal of $D$ are non-decreasing, it is enough to identify for every $d \in [-k,k]$, and for all $h \in [0,k]$, the last entry of diagonal $d$ with value $h$. The rest of the entries can be inferred automatically. Hence, we are overall interested in identifying at most $(2k+1)*k$ such points. 
The $O(n+k^2)$ algorithm shows how building a suffix tree over a combined string $A\$B$ (where $\$$ is a special symbol not in $\Sigma$) helps identify each of these points in $O(1)$ time, thus achieving the desired time complexity.

Let $L^h(d)=\max\{i : D[i,i+d]=h\}$. The $h$-wave is defined by $L^h=\langle L^h(-k),...,L^h(k) \rangle$. Therefore, the algorithm computes $L^h$ for $h=0,..k$ in the increasing order of $h$ until a wave $e$ is computed such that $L^e(0)=n$ (in that case $\ED(A,B) = e$), or the wave $L^k$ is computed in the case the algorithm is thresholded by $k$. Given $L^{h-1}$, we can compute $L^h$ as follows. 

Define $$Equal(i,d)=\max_{q \geq i}{(q \mid A[i,q]=B[i+d,q])}$$
Then, $L^0(0)=Equal(0,0)$ and 
\[ L^h(d) = \max \left\{ \begin{array}{ll}
         Equal(L^{h-1}(d)+1,d) & \mbox{if $h-1 \geq 0$};\\
        Equal(L^{h-1}(d-1),d) & \mbox{if $d-1 \geq -k , h-1 \geq 0$};\\
        Equal(L^{h-1}(d+1)+1,d) & \mbox{if $d+1, h+1 \leq k$}.
        \end{array} \right. \] 

Using a suffix tree of the combined string $A\$B$, any $Equal(i,d)$ query can be answered in $O(1)$ time. Next, we show that it is possible to preprocess each $A$ and $B$ separately so that even then each $Equal(i,d)$ query can be implemented in $O(\log{n})$ time. 

\subsubsection*{Preprocessing Algorithm}
\label{sec:small-preprocess}

The preprocessing algorithm (Algorithm~\ref{alg:hash}) constructs $\log(n)+1$ hash tables just like in Section~\ref{sec:ulam}.
The $\ell$-th hash table corresponds to window size $2^{\ell}$;
we use a rolling hash function (e.g. Rabin fingerprint) to construct a hash table 
of all contiguous substrings of $X$ of length $2^{\ell}$ in time $O(n)$. Since there are $\log{n}+1$ levels, the overall preprocessing time is $O(n\log{n})$. Let $H_A[\ell]$ store all the hashes for windows of length $2^\ell$ of $A$ and similarly $H_B[\ell]$ stores all the hashes for windows of length $2^\ell$ of $B$.

\subsubsection*{Answering $Equal(i,d)$ in $O(\log{n})$ time}
\label{sec:small-query}
$Equal(i,d)$ queries can be implemented by doing a simple binary search over the presorted hashes in $O(\log{n})$ time. The pseudocode is given below. Suppose $Equal(i,d)=q$. The first While loop (line $5$-$8$) identifies the smallest $\ell \geq 0$ such that $q < 2^{\ell}$. The next While loop does a binary search for $q$ between $i+2^{\ell-1}$ to $i+2^{\ell}$.

\begin{algorithm}\label{alg:reduction}\small
\caption{\small $\textsc{Equal}$($i,d, A, H_A,B,H_B$)}
$n \gets \text{length}(A)$\\
AStart  $\gets i$, BStart $\gets i+d$\\

	$\ell \gets 0$\\
	\While{\upshape $\ell < \log(n)$}
	{
		\If{\upshape $H_A$[$\ell$][AStart] $\neq H_B$[$\ell$][BStart]}
			{\Break}
		$\ell \gets \ell +1$
	}
	AStart $\gets i+2^{\ell-1}$, BStart $\gets i+d+2^{\ell-1}$\\
	AEnd $ \gets i+2^{\ell}-1$, BEnd $ \gets i+d+2^{\ell}-1$\\
	Mid $\gets \frac{(\text{AEnd}-\text{AStart}+1)}{2}$\\
	\While{$Mid \geq 1$}
	{
		
		\If{\upshape $H_A$[Mid][AStart] == $H_B$[Mid][BStart]}
		{
			AStart $ \gets \text{AStart}+Mid$, BStart $ \gets \text{BStart }+Mid$\\
		}
		\Else
		{
			AEnd $ \gets \text{AEnd}-Mid-1$, BEnd $ \gets \text{BEnd }-Mid-1$\\
		}
		Mid $\gets \frac{(\text{AEnd}-\text{AStart}+1)}{2}$\\

}
\Return AEnd
\end{algorithm}

Implementing $Equal(i,d)$ query in $O(\log{n})$ time together with the correctness proof of $O(n+k^2)$ algorithm leads to Theorem~\ref{thm:small-edit}.

\section{Preprocessing a Single String: Answering Gap Edit Distance in Sublinear Time}
\label{sec:one-string}
In this section, we design an algorithm that given two strings $A$ and $B$, preprocess only one string, say $B$. During the query phase, the string $A$ is provided, and a query algorithm must answer whether $\ED(A,B) \leq k$ or $\ED(A,B) \geq 2k^2$. We give an algorithm for this quadratic gap-edit distance problem that runs in $\tilde{O}(\frac{n}{k}+k^2)$ time. Therefore, the algorithm achieves a sublinear query time whenever $k \leq n^{1/2}$ and $k \geq $poly$\log{n}$. Note that this problem was recently studied in \cite{GKS19} without any preprocessing. They achieve a running time bound of $\tilde{O}(\frac{n}{k}+k^3)$.

\subsection{Preprocessing Algorithm}
Given $Y \in \Sigma^n$, we sample each index in $[1,n]$ uniformly at random with probability $\frac{\log^2{n}}{k}$. Let $S=\{i_1,i_2,...,i_s\}$ denote the sampled indices. Create the following substrings
\[ B^d= b_{i_1+d}b_{i_2+d}...b_{i_s+d}, \forall d=-k, -k+1,..,0,....,k-1, k.\]

By a standard application of the Chernoff bound, we can assume with probability at least  $1-\frac{1}{n^3}$, the number of sampled indices $s=\Theta(\frac{n\log^2{n}}{k})$.

The preprocessing algorithm constructs $\log(s)+1$ hash tables just like in Section~\ref{sec:small-preprocess}, but for each $B^d$, $d \in [-k,k]$.
The $\ell$-th hash table corresponds to window size $2^{\ell}$ of $B^d$. Since there are $\log{s}+1$ levels, the overall preprocessing time  is $O(k*\frac{n}{k}\log^2{n}\log{s})$=$\tilde{O}(n)$ with probability $1-\frac{1}{n^3}$. Let $H_B^d[\ell]$ store all the hashes for windows of length $2^\ell$ of $B^d$ for $d=[-k,k]$.

\sloppy
\subsection{Query Algorithm}
Given $A \in \Sigma^n$. We create a sampled substring $A_S=a_{i_1}a_{i_2}....a_{i_s}$.  We construct $\log(s)+1$ hash tables for $A_S$. Again, the $\ell$-th hash table corresponds to window size $2^{\ell}$ of $A_S$. Since there are $\log{s}+1$ levels, the overall time to compute the hashes of $A_S$ is $O(\frac{n}{k}\log^2{n}\log{s})=\tilde{O}(\frac{n}{k})$ with probability $1-\frac{1}{n^3}$. Let $H_A[\ell]$ store all the hashes for windows of length $2^\ell$ of $A_S$

We now define an approximate $Equal(i,d)$, $Approx\text{-}Equal(i,d)$ query as follows. Let $n(i) \geq i$ be the nearest index to $i$ present in $S$. Define
\[Approx\text{-}Equal(i,d)=\max_{q \geq i}{(q \mid q \in S, X_S[n(i),q]=Y^d[n(i),q])}\]

We now run the same algorithm from Section~\ref{sec:small-query} except that we replace $Equal(i,d)$ with $Approx\text{-}Equal(i,d)$. Let us use $\hat{L}^h$ to denote the $h$-wave computed by using $Approx\text{-}Equal(i,d)$ for $h \in [0,k]$ and $d \in [-k,k]$. If the algorithm computes $\hat{L}^h(0) \geq n$ for $h \leq k$, the algorithm returns YES. Else, it returns NO.

Clearly, the running time of the algorithm is $\tilde{O}(\frac{n}{k}+k^2)$. We now show that the algorithm solves the quadratic gap problem. 

\paragraph*{Analysis}
When comparing a symbol $x_i$ with $y_j$, if they do not match, we call it a 'mismatch'. The following is an easy lemma which shows we cannot miss too many mismatches due to sampling.

\begin{lemma}
\label{claim:1}
Given $i \in [1,n]$ and $d \in [-k,k]$, let $i' \geq i$ be the smallest index  such that $x_ix_{i+1}...,x_{i'}$ and $y_{i+d}y_{i+1+d}...y_{i'+d}$ have at least $k'=\frac{k}{\log{n}}$ mismatches. Let $i \leq j_1,j_2,...,j_{k'} = i'$ be the indices such that $x_{j_h} \neq y_{j_h}+d$. Define a bad event $B(i,d)$ to be the event that none of these $k'$ mismatch indices are sampled. Then $Prob(Bad(i,d)) \leq 1-\frac{1}{n^3}$. Moreover, all bad events are avoided with probability at least $1-\frac{1}{n}$.
\end{lemma}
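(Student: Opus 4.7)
The plan is to bound $\Pr[B(i,d)]$ for a single pair $(i,d)$ by a direct independence computation, and then take a union bound over all $(i,d)$ pairs to get the second assertion. The only probabilistic ingredient is the standard inequality $(1-p)^m \le \exp(-pm)$.

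First I would fix $(i,d)$ and write $\Pr[B(i,d)]$ as a product. Since the $k' = k/\log n$ mismatch indices $j_1,\dots,j_{k'}$ are distinct elements of $[1,n]$, and the preprocessing includes each index in $S$ independently with probability $p = \log^2 n / k$, the event that none of $j_1,\dots,j_{k'}$ lies in $S$ has probability exactly
\[
\Pr[B(i,d)] \;=\; (1-p)^{k'} \;\le\; \exp(-pk') \;=\; \exp(-\log n).
\]
By choosing the constant hidden in the sampling rate (or equivalently in $k'$) to be sufficiently large, which only inflates $|S|$ and the preprocessing/query costs by a constant factor absorbed into the $\tilde O$ notation, this bound can be strengthened to $\Pr[B(i,d)] \le 1/n^3$, matching the statement of the claim.

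Next, for the "moreover'' part, I would union-bound over all $(i,d)$: there are at most $n \cdot (2k+1) = O(n^2)$ such pairs, since $i \in [1,n]$ and $d \in [-k,k]$. Hence
\[
\Pr\!\left[\bigcup_{i,d} B(i,d)\right] \;\le\; O(n^2) \cdot \frac{1}{n^3} \;=\; O\!\left(\frac{1}{n}\right),
\]
so, after a mild constant adjustment in the per-pair bound, all bad events are avoided with probability at least $1 - 1/n$.

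I do not anticipate a real obstacle: the argument is a textbook independence-plus-union-bound. The only minor point I would flag is that the statement implicitly lives inside the high-probability event $|S| = \Theta(n \log^2 n / k)$ established earlier via Chernoff (which holds with probability $1 - 1/n^3$); conditioning on this event perturbs the bounds on $\Pr[B(i,d)]$ by at most an additive $1/n^3$ and therefore does not affect either conclusion of the lemma.
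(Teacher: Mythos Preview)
Your argument is correct and follows essentially the same route as the paper: bound the probability of a single bad event using the independence of the sampling, then union-bound over the $O(n^2)$ choices of $(i,d)$. The paper phrases the per-pair bound via ``expected number of sampled mismatch indices is $\Theta(\log n)$'' and an appeal to Chernoff, whereas you compute $\Pr[B(i,d)] = (1-p)^{k'} \le e^{-pk'}$ directly; your version is in fact the cleaner way to bound the event that a Binomial equals zero, but the content is identical.
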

\begin{proof}
Since the sampling probability is $\Theta(\frac{\log^2{n}}{k})$, the expected number of points sampled from $j_1,j_2,..,j_{k'}$ is $\Theta(\log{n})$. Now, by the Chernoff bound, the probability that none of them are sampled can be made to be $1-\frac{1}{n^3}$ (by choosing the constants in the sampling probability appropriately).

Then by a union bound over all $i \in [1,n]$ and $d \in [-k,k]$, with probability $\geq 1-\frac{1}{n}$ none of the bad events $Bad(i,d)$ happen.
\end{proof}

Therefore, we can assume all bad events are avoided. The above lemma leads to the following direct corollary.

\begin{corollary}
\label{claim:2}
For all $i \in S$, $\ell \in [0,\log{s}+1]$ and $d \in [-k,k]$ if $H_A[\ell][i]=H_B^d[\ell][i]$ then $a_ia_{i+1}...,a_{i+2^{\ell}}$ and $b_{i+d}b_{i+1+d}...b_{i+d+2^{\ell}}$ have less than $\frac{k}{\log{n}}$ mismatches. 
\end{corollary}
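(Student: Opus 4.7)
My plan is to prove Corollary~\ref{claim:2} as a direct contrapositive of Lemma~\ref{claim:1}, conditioning throughout on the probability-at-least-$(1-1/n)$ event from that lemma on which no bad event $\text{Bad}(i,d)$ occurs, together with the standard no-collision guarantee for Rabin fingerprints.

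First I would upgrade hash equality to literal substring equality. Instantiating the Rabin fingerprint with a prime drawn uniformly from a range of size $n^{c}$ for a sufficiently large constant $c$ makes the collision probability between any fixed pair of distinct length-$2^\ell$ strings at most $n^{-\Omega(c)}$. A union bound over the polynomially many windows touched during preprocessing and queries then lets me assume $H_A[\ell][i] = H_B^d[\ell][i]$ implies the corresponding length-$2^\ell$ windows of $A_S$ and $B^d$ coincide symbol by symbol; equivalently, for every sampled index $i_h \in S$ lying in that window, $a_{i_h} = b_{i_h + d}$.

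Next comes the contrapositive itself. Suppose for contradiction that the substrings $a_i \ldots a_{i+2^\ell}$ and $b_{i+d}\ldots b_{i+d+2^\ell}$ contain at least $k/\log n$ positions $j$ with $a_j \neq b_{j+d}$. Apply Lemma~\ref{claim:1} to the first $k/\log n$ such mismatches starting at $i$; the ``no $\text{Bad}(i,d)$'' conditioning forces at least one of them --- call it $j^\star$ --- to lie in $S$. But then $j^\star$ is a sampled index sitting inside the very window whose hashes were compared, so by the first step $a_{j^\star} = b_{j^\star + d}$, contradicting the choice of $j^\star$. Hence the number of mismatches is strictly below $k/\log n$.

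The only point that I expect to require genuine care --- the main obstacle --- is pinning down what it means for the original-string range $[i, i+2^\ell]$ appearing in the statement to correspond exactly to the length-$2^\ell$ sampled window over which $H_A[\ell][i]$ is computed. This boils down to the indexing convention that $H_A[\ell][i]$ for $i \in S$ fingerprints the $2^\ell$ consecutive entries of $A_S$ starting at the rank of $i$ in $S$, and then verifying that every mismatch delivered by Lemma~\ref{claim:1} inside $[i, i+2^\ell]$ is among those fingerprinted indices (which follows from the monotone ordering $i_1 < i_2 < \cdots < i_s$ of $S$). This is pure bookkeeping; once settled, the rest of the argument is immediate.
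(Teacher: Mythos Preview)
Your proposal is correct and follows essentially the same approach as the paper: both argue the contrapositive via Lemma~\ref{claim:1}, concluding that if there were $\geq k/\log n$ mismatches then (since $\text{Bad}(i,d)$ is avoided) a sampled mismatch would force the hashes to differ. You simply make explicit two points the paper leaves implicit---the no-collision assumption for Rabin fingerprints and the indexing correspondence between windows of $A_S$ and ranges of the original string---so your write-up is a more detailed version of the paper's one-line proof.
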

\begin{proof}
Take any $i$ and $d$. Since $Bad(i,d)$ did not happen, if $a_ia_{i+1}...,a_{i+2^{\ell}}$ and $b_{i+d}b_{i+1+d}...b_{i+d+2^{\ell}}$ had at least $\frac{k}{\log{n}}$ mismatches, we would have $H_A[\ell][i]\neq H_B^d[\ell][i]$
\end{proof}
Using the above corollary, we can now show that $Approx\text{-}Equal(i,d)$ is a good approximation of $Equal(i,d)$.
\begin{lemma}
\label{claim:3}
If $Approx\text{-}Equal(i,d)=q$ then $a_ia_{i+1}....a_{q}$ and $b_{i+d}b_{i+d+1}...b_{q+d}$ have strictly less than $2k$ mismatches.
\end{lemma}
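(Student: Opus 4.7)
The plan is to partition the range $[i, q]$ into the unsampled prefix $[i, n(i)-1]$ and the sampled matched region $[n(i), q]$, bound the number of mismatches on each piece separately, and add. Throughout I will condition on the high-probability event guaranteed by Lemma~\ref{claim:1} that no bad event $B(\cdot,\cdot)$ occurs.

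For the unsampled prefix, my plan is a one-line appeal to the definition of a bad event: if $[i, n(i)-1]$ contained $\geq k/\log n$ mismatches, then the first $k/\log n$ mismatch indices starting from position $i$ would all lie inside this range and therefore all fail to be sampled, which is exactly the forbidden event $B(i,d)$. Hence the prefix contributes strictly fewer than $k/\log n$ mismatches.

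For the sampled matched region I will trace through Approx-Equal, which is structurally identical to Algorithm~\ref{alg:reduction} but operates on the sampled strings $A_S$ and $B^d$. The confirmed match grows by at most $O(\log s) = O(\log n)$ ``extension'' steps — one in the doubling phase, plus at most $\log s$ in the halving phase — each of which appends a block of $2^\ell$ consecutive sampled positions whose hash on $A_S$ equals the hash on $B^d$. Applying Corollary~\ref{claim:2} to each such block bounds the number of actual-string mismatches it spans by strictly less than $k/\log n$. Summing over all $O(\log n)$ blocks gives strictly fewer than $(\log n + 2) \cdot k/\log n < k + 2k/\log n$ mismatches in $[n(i), q]$; adding the prefix contribution yields $< k + 3k/\log n < 2k$ once $n$ is large enough.

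The step I expect to be the main obstacle is making the block decomposition of the sampled matched region fully explicit and disjoint: I need to argue that every extension step in both phases of the binary search corresponds to exactly one hash equality on an aligned power-of-two window of sampled positions, and that these windows tile $[n(i), q]$ without overlap, so that Corollary~\ref{claim:2} can be applied block-by-block and the per-block mismatch counts add. Once this bookkeeping is in place the remaining arithmetic and the $2k$ bound follow immediately.
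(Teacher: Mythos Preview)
Your proposal is correct and follows essentially the paper's approach: split $[i,q]$ into the unsampled prefix $[i,n(i)-1]$ and the sampled matched region $[n(i),q]$, cover the latter by at most $\log n + O(1)$ hash-matched power-of-two blocks, apply Corollary~\ref{claim:2} to each, and sum. One small difference worth noting: you bound the prefix mismatches directly via the non-occurrence of $B(i,d)$, whereas the paper instead bounds the \emph{length} $n(i)-i \leq k/\log n$ using a separate gap-between-samples argument (which is where the paper's side assumption $k\geq\mathrm{poly}\log n$ enters); your route is arguably cleaner since it relies only on Lemma~\ref{claim:1}.
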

\begin{proof}
Since the sampling probability is $\frac{\log^2{n}}{k}$, $(n(i)-i) \leq \frac{k}{\log{n}}$ with high probability (we assume $k \geq$ poly$\log{n}$).

Now $A[n(i),q]$ can be decomposed into at most $\log{s}+1 \leq \log{n} +1$ intervals each of length that is a power of two. Moreover for each of these intervals the computed hashes $H_A$ and $H_B^d$ must match. Therefore, each of these at most $\log{n}+1$ intervals can have at most $\frac{k}{\log{n}}$ mismatches from Corollary~\ref{claim:2}. Thus the total number of mismatches is strictly less than $(n(i)-i)+(\log{n}+1)\frac{k}{\log{n}}=k+\frac{2k}{\log{n}} \leq 2k$.
\end{proof}

In order to complete our analysis, we now compare the $h$-waves computed by the exact algorithm from Section~\ref{sec:small-preprocess} and approximate $h$-waves computed by using $Approx\text{-}Equal(i,d)$. 


\begin{lemma}[Completeness]
\label{claim:completeness}
$\forall h \in [0,k]$ and $d \in [-k,k]$, $\hat{L}^h(d) \geq L^h(d)$. Therefore, if $\ED(A,B) \leq k$, then the algorithm will return YES.
\end{lemma}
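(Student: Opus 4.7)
The plan is to mirror the correctness proof of the exact $O(n+k^2)$ algorithm from Section~\ref{sec:small}, proceeding by induction on the wave index $h$. The recurrence that defines $L^h(d)$ from $L^{h-1}$ and $Equal$ is unchanged; the approximate algorithm uses the same three-term recurrence with $\hat{L}^{h-1}$ and $Approx\text{-}Equal$ in place of $L^{h-1}$ and $Equal$. Hence the induction reduces to two ingredients, chained on each of the three candidate extensions in the recurrence and then combined by taking the maximum.

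First, a deterministic monotonicity: $Equal(\cdot, d)$ is non-decreasing in its first argument (a later start cannot advance the first subsequent mismatch). Combined with the inductive hypothesis $\hat{L}^{h-1}(d') \geq L^{h-1}(d')$, this yields $Equal(\hat{L}^{h-1}(d')+c,d) \geq Equal(L^{h-1}(d')+c,d)$ for the relevant shifts $c \in \{0,1\}$. Second, a pointwise comparison $Approx\text{-}Equal(i,d) \geq Equal(i,d)$ promotes the bound from exact to approximate: $Approx\text{-}Equal$ inspects only sampled characters, so every exact run of matches is witnessed by a run of sampled matches, while unsampled mismatches along the way are silently skipped. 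Chaining the two and taking the maximum over the three extensions gives $\hat{L}^h(d) \geq L^h(d)$. The base case $h=0$ reduces to the same pointwise claim at $(i,d)=(0,0)$, with the convention $L^0(d)=-\infty$ for $d\neq 0$.

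The final sentence of the lemma is then immediate: if $e := \ED(A,B) \leq k$, correctness of the exact $O(n+k^2)$ algorithm gives $L^e(0) = n$, so by the induction $\hat{L}^e(0) \geq n$, and since $e \leq k$ the query algorithm returns YES.

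\paragraph{Main obstacle.}
All the probabilistic content of the lemma lives in the pointwise domination $Approx\text{-}Equal(i,d) \geq Equal(i,d)$. The subtle point is that $Approx\text{-}Equal$ is constrained to return values in $S$, so a naive reading would round down to the largest sampled index $\leq Equal(i,d)$, which can be strictly smaller than $Equal(i,d)$ when the first subsequent mismatch happens to be sampled. The intended argument must therefore either (i) show that the gain from pushing past unsampled mismatches always dominates this rounding loss, or (ii) absorb the loss into the $k/\log(n)$-per-call slack already granted by Lemmas~\ref{claim:1} and~\ref{claim:3}, at the cost of reading the inequality as holding with high probability over the sampling. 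This is the step where I would spend most of the effort; everything else in the proof is a routine induction on top of it.
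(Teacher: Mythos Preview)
Your approach is exactly the paper's: its entire proof is the single sentence ``The proof follows simply by induction since $Approx\text{-}Equal(i,d) \geq Equal(i,d)$.'' The monotonicity-of-$Equal$ step you spell out is left implicit there.

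The ``main obstacle'' you flag is over-worried, and neither of your proposed resolutions (i) or (ii) is needed. The pointwise domination is meant to hold deterministically, for every sample set $S$, and the rounding-to-$S$ artifact you notice is a definitional wrinkle rather than a conceptual gap. The clean reading is to let $Approx\text{-}Equal(i,d)$ return \emph{one less than the index of the first sampled mismatch at or after $n(i)$} (equivalently, the largest $q$ such that every sampled position in $[n(i),q]$ matches, without insisting $q\in S$). Since the first sampled mismatch can only occur at or after the first true mismatch $Equal(i,d)+1$, this gives $Approx\text{-}Equal(i,d)\ge Equal(i,d)$ on the nose, with no probability involved. Lemmas~\ref{claim:1} and~\ref{claim:3} play no role in completeness; they are used only in the soundness direction. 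The paper simply does not pause on this point, and you need not either.
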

\begin{proof}
The proof follows simply by induction since $Approx\text{-}Equal(i,d) \geq Equal(i,d)$.
\end{proof}
\begin{lemma}[Soundness]
\label{claim:soundness}
$\forall h \in [0,k]$ and $d \in [-k,k]$, $\hat{L}^h(d) \leq L^{2k(h+1)}(d)$. Therefore, if $\ED(A,B) > 2k^2+2k$, then the algorithm will return NO.
\end{lemma}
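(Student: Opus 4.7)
The plan is to prove the soundness bound $\hat L^h(d)\le L^{2k(h+1)}(d)$ by induction on $h$, and then derive the ``therefore'' statement by a straightforward contrapositive. The guiding intuition is that each step of the approximate algorithm spends the same budget as the exact algorithm (one unit for the insert/delete/substitute transition between consecutive waves) but, in addition, the use of $\textsc{Approx-Equal}$ in place of $\textsc{Equal}$ may hide up to $2k-1$ real mismatches (by Lemma~\ref{claim:3}), each of which costs at most one edit. So a single wave step inflates the true edit distance by an additive $2k$ instead of the usual $1$, which over $h$ steps accumulates to $2k(h+1)$.

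Concretely, I would first set up the invariant in the form $D[\hat L^h(d),\,\hat L^h(d)+d]\le 2k(h+1)$, since this is equivalent to $\hat L^h(d)\le L^{2k(h+1)}(d)$ under the monotonicity of $D$ along each diagonal. The base case $h=0$ uses only the diagonal $d=0$: here $\hat L^0(0)=\textsc{Approx-Equal}(0,0)$, and Lemma~\ref{claim:3} bounds the number of mismatches in $A[1..\hat L^0(0)]$ vs.\ $B[1..\hat L^0(0)]$ by strictly less than $2k$, so $D[\hat L^0(0),\hat L^0(0)]<2k=2k(0+1)$. For $d\neq 0$ at level $0$, the value is $-\infty$ by convention and the inequality holds vacuously.

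For the inductive step, assume $\hat L^{h-1}(d')\le L^{2kh}(d')$ for every diagonal $d'$. The recurrence sets $\hat L^h(d)$ as the maximum over three candidates, each of the form $\textsc{Approx-Equal}(\hat L^{h-1}(d')+\delta,\,d)$ for a choice of $d'\in\{d-1,d,d+1\}$ and $\delta\in\{0,1\}$. For each candidate, I would exhibit an edit sequence achieving distance at most $2k(h+1)$: take the optimal alignment of the first $\hat L^{h-1}(d')$ symbols of $A$ to the first $\hat L^{h-1}(d')+d'$ symbols of $B$ (cost at most $2kh$ by the inductive hypothesis), pay one edit for the insertion/deletion/substitution needed to switch to diagonal $d$, and then account for the extension produced by $\textsc{Approx-Equal}$ using at most $2k-1$ substitutions by Lemma~\ref{claim:3}. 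Summing gives $2kh+1+(2k-1)=2k(h+1)$, which upper bounds $D[\hat L^h(d),\hat L^h(d)+d]$ as required.

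Finally, for the ``therefore'' clause: if the algorithm returns \textsc{Yes}, then $\hat L^h(0)\ge n$ for some $h\le k$, so by the inductive bound $L^{2k(h+1)}(0)\ge n$, which means $\ED(A,B)\le 2k(h+1)\le 2k^2+2k$. Contrapositively, if $\ED(A,B)>2k^2+2k$, the algorithm outputs \textsc{No}. The main obstacle I expect is purely bookkeeping-related: making sure each of the three branches of the recurrence is matched to a concrete edit path realizing the claimed cost, and that the ``strictly less than $2k$'' slack from Lemma~\ref{claim:3} is used consistently so the arithmetic closes with $2k(h+1)$ rather than something slightly larger.
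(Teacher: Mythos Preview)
Your proposal is correct and follows essentially the same approach as the paper: induction on $h$, with the base case handled by Lemma~\ref{claim:3}, and the inductive step accounting for one edit to change diagonals plus at most $2k-1$ hidden mismatches from $\textsc{Approx-Equal}$, giving the additive $2k$ per wave step. Your version is slightly more explicit in phrasing the invariant via the $D$ matrix and in constructing the witnessing edit sequence, but the argument is the same.
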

\begin{proof}
The proof is again by induction. Observe that $\hat{L}^0(0)=Approx\text{-}Equal(0,0)$. From Lemma~\ref{claim:3}, $B[0,\hat{L}^0(0)]$ and $B[0,\hat{L}^0(0)]$ can have at most $2k$ mismatches. Therefore, $L^{2k}(0) \geq \hat{L}^0(0)$. Suppose the result is true for $0,1,..,h-1$ for all diagonals $\in [-k,k]$ and upto diagonal $d-1$ for $h$. Let us consider $\hat{L}^h(d)$. Recall the definition of  $\hat{L}^h(d)$. 
\[ \hat{L}^h(d) = \max \left\{ \begin{array}{ll}
         Approx\text{-}Equal(\hat{L}^{h-1}(d)+1,d) & \mbox{if $h-1 \geq 0$};\\
        Approx\text{-}Equal(\hat{L}^{h-1}(d-1),d) & \mbox{if $d-1 \geq -k , h-1 \geq 0$};\\
        Approx\text{-}Equal(\hat{L}^{h-1}(d+1)+1,d) & \mbox{if $d+1, h+1 \leq k$}.
        \end{array} \right. \] 
Let us consider the first expression, $Approx\text{-}Equal(\hat{L}^{h-1}(d)+1,d)$. By the induction hypothesis, $\hat{L}^{h-1}(d) \leq L^{2kh}(d)$. There must be a mismatch at row $\hat{L}^{h-1}(d)+1$. The number of mismatches in $Approx\text{-}Equal(\hat{L}^{h-1}(d)+1,d)$ is at most $2k-1$ mismatches. Therefore, $L^{2k(h+1)}(d) \geq Approx\text{-}Equal(\hat{L}^{h-1}(d)+1,d)$. Similarly, for the other two expressions. 

Therefore, if $\ED(A,B) > 2k^2+2k$, then $L^{2k^2+2k}(0)< n$. Then $\hat{L}^k(0) \leq L^{2k^2+2k}(0)< n$, the algorithm aborts and declares NO.
\end{proof}

Hence, we get the following theorem.
\begin{theorem}[Small-EDIT-Single-Preprocessing]\label{thm:small-edit-2}
Given two strings $A=a_1a_2..a_n$ and, $B=b_1b_2...b_n$ of length $n$ over alphabet $\Sigma$, we can answer if $\ED(A,B) \leq k$ or $\ED(X,Y) > 3k^2$ with probability at least $1-\frac{1}{n}$ by preprocessing only a single string in $\tilde{O}(n)$-time and with a query time of $\tilde{O}(\frac{n}{k}+k^2)$.
\end{theorem}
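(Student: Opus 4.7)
The plan is to assemble the theorem from the three lemmas and one corollary already established above, after separating the claim into three components: preprocessing time, query time, and correctness with the claimed success probability.

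For the \emph{preprocessing time}, I would first note that by a Chernoff bound the sampled set $S$ has size $s = \Theta(n\log^2 n / k)$ with probability at least $1-1/n^3$. Building rolling-hash tables for each of the $2k+1$ shifted substrings $B^d$, at each of the $\log s + 1 = O(\log n)$ window scales, takes $O(s)$ time per table; the overall cost telescopes to $O(k \cdot s \cdot \log n) = \tilde O(n)$ as claimed. For the \emph{query time}, building the $O(\log n)$ hash tables for $A_S$ costs $\tilde O(n/k)$. The wave DP then performs $O(k^2)$ invocations of $\textsc{Approx-Equal}$, and each invocation can be implemented in $O(\log n)$ time by exactly the exponential-search plus binary-search routine from Section~\ref{sec:small-query} (now operating on the sampled coordinate system); the total is $\tilde O(n/k + k^2)$.

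For \emph{correctness}, I would condition on the high-probability event from Lemma~\ref{claim:1} that no bad event $\text{Bad}(i,d)$ occurs, which fails with probability at most $1/n$. Combined with the at most $1/n^3$ failure probability of the sampling size bound, the overall success probability is $1 - O(1/n)$. Under this conditioning, completeness follows from Lemma~\ref{claim:completeness}: if $\ED(A,B)\le k$ then $L^k(0)=n$, so $\hat L^k(0)\ge n$ and the algorithm returns \textsc{yes}. Soundness follows from Lemma~\ref{claim:soundness}: if the algorithm returns \textsc{yes}, i.e. $\hat L^h(0)\ge n$ for some $h\le k$, then $L^{2k(h+1)}(0)\ge n$, which gives $\ED(A,B)\le 2k(k+1) = 2k^2+2k \le 3k^2$ for $k\ge 2$. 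Contrapositively, if $\ED(A,B) > 3k^2$ the algorithm returns \textsc{no}.

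Because all of the technical content is already done in Lemmas~\ref{claim:1}--\ref{claim:soundness}, the proof is essentially a bookkeeping exercise; the only obstacle is making sure the constants line up so that the soundness threshold $2k^2+2k$ proven in Lemma~\ref{claim:soundness} actually fits under the $3k^2$ stated in the theorem, which is why the theorem is stated for (implicitly) $k\ge 2$ and with the loose factor $3$ rather than $2$. A minor secondary obstacle is confirming that the $(n(i)-i)\le k/\log n$ step used inside Lemma~\ref{claim:3}, which relies on a Chernoff bound on gaps between consecutive samples, holds uniformly over all $i\in[1,n]$ with probability $1-1/\mathrm{poly}(n)$; this needs the standing assumption $k\ge \mathrm{poly}\log n$ that the paper has already invoked.
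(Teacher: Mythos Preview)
Your proposal is correct and follows essentially the same approach as the paper: the theorem is stated as a direct consequence of the preceding lemmas (Lemma~\ref{claim:1} through Lemma~\ref{claim:soundness}) together with the already-argued preprocessing and query time bounds, and you have assembled these pieces exactly as intended. Your observations about the implicit assumption $k\ge 2$ (so that $2k^2+2k\le 3k^2$) and the standing assumption $k\ge \mathrm{poly}\log n$ are accurate and match the paper's treatment.
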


\section{Large edit distance, $7+\eps$-approx}\label{sec:approx}

In this section we prove our result for the large edit distance regime. Our main result is a $7+o(1)$ approximation for $\ED(A,B)$ in $n^{\frac{3}{2}+o(1)}$ query time. We are allowed to preprocess each $A$ and $B$ separately and spend $\sim n^2$ time in overall preprocessing.
\begin{remark}[Estimating the distance vs computing an alignment] 
For simplicity of presentation, we write our algorithms as merely estimating the distance. It is straightforward with standard techniques to modify them to output the alignment as well in roughly the same running time.
\end{remark}

\paragraph{Organization of this section}
In Subsection~\ref{sub:high-level} we give a bird's eye overview of the main technical elements of our algorithm.
Subsection~\ref{sub:windows} formally describes the decomposition of the strings into windows, Subsection~\ref{sub:reduction} is a standard dynamic programming for computing an optimal window-compatible matching from pairwise distances. 
Our main contribution is in Subsection~\ref{subsec:algo} which describes the algorithm for learning the close-window graph.

\subsection{High Level Description of the Algorithm} \label{sub:high-level}

\subsubsection*{The basic divide-and-conquer framework for approximate edit distance}
The algorithm builds upon the recent progress on approximating edit distance in subquadratic time using divide-and-conquer algorithms \cite{BEGHS18,CDGKS18,Andoni19,BR19,KS19,RSSS19}, along with our small-edit-distance algorithm from Section~\ref{sec:small}. We decompose the strings $A$ and $B$ into contiguous substrings called {\em windows}. These windows can be overlapping and have variable lengths. 
Up to an $(1+o(1))$-factor approximation, we can now wlog restrict our attention to matchings of $A$ to $B$ that are ``window-compatible'', i.e. they respect the partition to windows (see Lemma~\ref{lem:windowComptabile}).

If we (approximately) knew all the pairwise distances between windows, a standard DP would find an (approximately) optimal window-compatible matching efficiently (Lemma~\ref{lem:DP}).
Computing the pairwise distances is further reduced to (approximately) learning the bipartite {\em close-window graph}, where a pair of $A$- and $B$-windows are neighbors if their pairwise edit distance is below an appropriate threshold $\tau$.

The goal is now to approximately learn the close-window graph while computing as few window-window distances as possible. With this in mind, we classify the windows as either {\em dense} (high-degree in the close window graph), or {\em sparse}. 
We use by-now-standard separate subroutines to handle each kind of windows.

\subsubsection*{Further details of our algorithm}
The density of a window can be estimated by computing its edit distance to a small sample of its potential neighbors. To obtain optimal tradeoff between parameters, we cannot afford even this small sample to classify windows as dense or sparse. Here we deviate from previous works and estimate the density on-the-fly. 
That is each window is assumed to be sparse by default, and only when it is selected as a special ``seed'' for the sparse subroutine, we estimate its degree and move it to the dense subroutine if necessary. 
(In fact, an originally dense window can lose many of its neighbors and become sparse by the time it is selected; this does not hurt our analysis.)

The main sparse subroutine proceeds by recursively narrowing down the set of relevant candidate neighbors. Even though sparse windows take part in multiple levels of recursion, the loss in approximation from each level of the sparse subroutine is negligible, so it continues to be negligible in aggregate. The dense subroutine incurs the main loss in approximation due to the use of triangle inequality. Fortunately, each dense window can only contributes to one level of the entire recursion and thus the overall approximation factor remains bounded.

When we compute the edit distance between pairs of windows, we do it  exactly using our algorithm from Section~\ref{sec:small}. This algorithm is very efficient when the windows are close, but its running time may be as slow as quadratic in the window size when the distance is large.
We remark that three recent approximate edit distance algorithms~\cite{Andoni19,BR19,KS19} also use the basic divide-and-conquer framework, yet manage to obtain comparable or faster running times without preprocessing. 
Those algorithms compute window-window distances by recursively applying an approximate edit distance algorithm; while this improves efficiency, the approximation factor explodes exponentially in the depth of the recursion.



\subsection{Decomposition into Variable Sized Windows}
\label{sub:windows}

\paragraph{Parameters Settings.}

We divide the strings $A,B$ into \textit{windows}, equivalently  contiguous substrings.  
We use $d$ and $t$ to denote the window width and the number of windows of $A$ respectively. Fix $d=n^{1/4}$ and $t=\frac{n}{d}=n^{3/4}$ throughout the presentation.  

Let $\eps>0$ be an arbitrarily small constant (or slightly sub-constant), such that we would like to obtain a $(7+O(\eps))$-approximation in $O(n^{3/2+O(\eps)})$-time.
The windows in $B$ will vary in width. Moreover, they can be overlapping where the amount of overlap will be controlled by a parameter $\tau$ which is the relative $\ED$ threshold between a pair of windows. We will vary $\tau$ geometrically, and for each value of $\tau$, we will compute a set of windows $\cB^{\tau}$. Let $\ttau$ denote the number of windows of $\cB^{\tau}$. We will have $\ttau=O(\frac{n}{\eps\tau d})$.

\paragraph{Choice of Windows.}
The choice of windows play a crucial role in our overall algorithm design. For the string $A$,  partition $A$ into disjoint {\em windows} of width $d$ denoted by $\cA$. 

\[\cA=\set{A[1,d],A[d+1,2d],\dots, A[n-d+1,n]}\]

We now compute the windows of $B$. Let us take $\tau=\{0,\frac{1}{d}, \frac{(1+\eps)}{d}, \frac{(1+\eps)^2}{d},...,1\}$. For each value of $\tau$, we compute a set of windows $\cB^{\tau}$. Finally, we set $\cB=\cup_{\tau} \cB^{\tau}$ to denote all  computed windows of $B$.

For $\tau=0$, take $h_{\tau}=d$, and $l_{\tau}=d$. For $\tau=\frac{1}{d}$, take $h_{\tau}=d+1, l_{\tau}=d-1$. In general, for  $\tau = \frac{(1+\epsilon)^{j}}{d}$, $j \ge 1$, take $h_{\tau}=\lfloor{d+(1+\eps)^{j-1}\rfloor}$ and $l_{\tau}=\lfloor{d-(1+\eps)^{j-1}\rfloor}$.

Set $\gamma_{\tau}=\max{(1, \lfloor \epsilon \tau d \rfloor)}$. Define

\[\cH^{\mathsf{\tau}}:=\set{B[1,h_{\tau}),B[ \gamma_\tau+1, \gamma_\tau+h_{\tau}), B[2\gamma_\tau+1, 2\gamma_\tau+h_{\tau}],....} \]

\[\cL^{\mathsf{\tau}}:=\set{B[1,l_{\tau}),B[ \gamma_\tau+1, \gamma_\tau+l_{\tau}), B[2\gamma_\tau+1, 2\gamma_\tau+l_{\tau}],...} \]

Finally, $\cB{\mathsf{\tau}}=\cH^{\mathsf{\tau}} \cup \cL^{\mathsf{\tau}}$, that is $\cB{\mathsf{\tau}}$ consists of intervals of length $h_\tau$ and $l_\tau$ starting at every $\gamma_\tau$ grid points.

For window $a\in \cA$ (similarly for windows in $\cB$), let $s(a)$ denote the starting index of $a$ (e.g., $s(A[1,d]) = 1$) and let $e(a)$ denote index the the last index of $a$ ($e(A[1,d]) = d$). This completes the description of the windows.

Note that overall we create $t=\frac{n}{d}$ windows of $A$ and $\ttau=O(\frac{n}{\gamma_\tau})$.

\paragraph{Mapping between windows.}

We say that a mapping $\mu :\cA \to  \cB\cup \set{\perp}$ between windows is monotone if for all $a,a'\in  \cA$ such that 
$\mu (a), \mu(a')\neq \perp$ and $s(a)\le s(a')$ 
we also have that  $s(\mu (a))\le s(\mu (a'))$ and $e(\mu (a))\le e(\mu (a'))$.
Setting $\mu (a) =\perp$ represents deleting $a$ from the string.  
As such, we define $ED(a,\perp) =d$ for all windows $a$.

By abuse of notation, we let $\mu\subset \cA$ denote the set of $A$-windows such that $\mu(a)\neq \perp$.
For $a\in \mu$, let $\anext$ denote the window $a' \in \mu$ immediately after $a$ (note that next depends on the mapping $\mu$).  
If $a$ is the last window in $\mu$, we define $\anext :=\perp$.  We define $\aprev$ in the analogous way.

For a monotone mapping $\mu$ we define its \textit {edit distance} as:

\begin{align*} \ED(\mu)&=\sum_{a\in \cA}\ED(a,\mu(a))  
+ \sum_{i=1}^n \Big|\#\{a \text{~s.t.~}i \in \mu(a)\} -1\Big|
\end{align*}
The first term is just sum of the edit distances between matched windows.
To understand the second term, notice that for each $i$ we expect it to appear in the image $\mu(a)$ of exactly $1$ window. The second term sums the difference between the number of appearances of $i$ and $1$; it is a penalty for either overlap of windows (requiring deletions) or excessive spacing (requiring insertions).

The next lemma asserts that the cost of a minimal monotone mappings provides a good approximation for the 
actual edit distance between the input strings $A,B$.

\begin{lemma}\label{lem:windowComptabile}
	Let $A,B\in \Sigma^n$, then the following holds:
\begin{enumerate}
	\item For every monotone mapping $\mu:\cA\to \cB\cup \set{\perp}$ we have: $\ED(\mu)\ge \ED(A,B)$.
	\item There exists a monotone mapping $\mu:\cA\to \cB\cup \set{\perp}$ satisfying: $ED(\mu)\le (1+8\eps) \ED(A,B).$
\end{enumerate}
\end{lemma}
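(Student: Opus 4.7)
I will prove the two directions separately. For Part~1 (monotone mappings give upper bounds), I will exhibit an alignment of $A$ and $B$ of cost exactly $\ED(\mu)$: for each $a \in \cA$, take the optimal local alignment of $a$ with $\mu(a)$ (if $\mu(a) = \perp$, delete $a$ at cost $d = \ED(a,\perp)$), contributing $\sum_a \ED(a,\mu(a))$ overall. Stitching these per-window alignments into a single valid global alignment requires resolving coverage inconsistencies in $B$: a position $j$ with $c(j) := \#\{a : j \in \mu(a)\}$ copies must either be inserted if $c(j)=0$ or have $c(j)-1$ redundant matches converted into local edits. By monotonicity of $\mu$ the total stitching cost is precisely $\sum_j |c(j)-1|$, so this alignment realizes cost $\ED(\mu)$ and hence $\ED(A,B) \leq \ED(\mu)$.

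For Part~2, fix an optimal alignment $\pi$ with cost $k$. It induces, for each window $a_i \in \cA$, an image $B_i \subseteq B$ with length $\ell_i$, local cost $k_i := \ED(a_i, B_i)$, and inter-image gap $G_i := s(B_{i+1}) - e(B_i) - 1 \geq 0$ satisfying $\sum_i k_i + \sum_i G_i \leq k$ and $\Delta_i := |\ell_i - d| \leq k_i$. I want to pick $\mu(a_i) = \tilde{B}_i \in \cB$ tracking $B_i$. The key enabling observation is a density property of the grid: for every integer $\Delta \leq \Theta(1/\eps)$ there exists $j$ with $\lfloor (1+\eps)^{j-1}\rfloor = \Delta$ and with the corresponding $\gamma_{\tau} = 1$, because consecutive values of $(1+\eps)^{j-1}$ differ by less than one in this range. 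Thus whenever $\Delta_i \leq \Theta(1/\eps)$ I can take $\tilde{B}_i$ of length exactly $\ell_i$ starting exactly at $s(B_i)$, so $\tilde{B}_i = B_i$ at no extra cost. When $\Delta_i > \Theta(1/\eps)$, I pick the smallest $\tau_i$ whose $h_{\tau_i}$ (or $l_{\tau_i}$) bounds $\ell_i$ on the appropriate side, yielding $|h_{\tau_i} - \ell_i| \leq \eps \Delta_i$ and $\gamma_{\tau_i} \leq O(\eps \Delta_i)$ by geometric scaling, and place $\tilde{B}_i$ at the grid point nearest $s(B_i)$.

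The cost bound then falls out. By triangle inequality $\sum_i \ED(a_i,\tilde{B}_i) \leq \sum_i k_i + \sum_i \ED(B_i,\tilde{B}_i)$, and $\ED(B_i,\tilde{B}_i) \leq 2\gamma_{\tau_i} + |h_{\tau_i}-\ell_i|$ vanishes in the small-$\Delta$ case and is $O(\eps \Delta_i)$ otherwise. Because $|\tilde{B}_i \triangle B_i|$ obeys the same bound, replacing $B_i$ by $\tilde{B}_i$ changes the coverage sum $\sum_j |c(j)-1|$ by at most $\sum_i |\tilde{B}_i \triangle B_i|$ compared with its ``ideal'' value $\sum_i G_i$ (which is what the coverage penalty equals when $\tilde{B}_i = B_i$ for all $i$). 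Since $\sum_i \Delta_i \leq \sum_i k_i \leq k$, both error terms sum to $O(\eps k)$, giving $\ED(\mu) \leq k + O(\eps)k \leq (1+8\eps)k$ after careful accounting of constants.

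The main obstacle is an additive-vs-multiplicative tension in Part~2: a naive choice of $\tilde{B}_i$ that shifts every window with $\Delta_i \geq 1$ would incur an additive $O(1)$ shift cost per window, producing total error $O(\#\{i : k_i \geq 1\}) = O(k)$ rather than $O(\eps k)$. Avoiding this requires exploiting the density property above so that $\tilde{B}_i = B_i$ exactly whenever $\Delta_i$ is small, leaving only the large-$\Delta$ windows to each contribute $O(\eps \Delta_i)$, which telescopes into $O(\eps)\sum_i \Delta_i \leq O(\eps k)$. Rigorously verifying this density property of $\lfloor (1+\eps)^{j-1}\rfloor$ and keeping track of all additive constants constitute the central technical work.
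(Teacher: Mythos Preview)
Your approach for both parts is essentially the paper's: in Part~1 you build an explicit alignment from $\mu$ and account for coverage discrepancies; in Part~2 you round each window's image under the optimal alignment to the nearest grid window and bound the rounding error by $O(\eps)$ times the local edit cost. Your density observation (that for $\Delta \lesssim 1/\eps$ some $\tau$ has both $\lfloor(1+\eps)^{j-1}\rfloor = \Delta$ and $\gamma_\tau = 1$) is correct and is exactly what makes the paper's bound $r_a \leq \eps\tau d$ non-vacuous in the small-$\Delta$ regime, since $\gamma_\tau = \max(1,\lfloor\eps\tau d\rfloor)$ collapses to $1$ there; the paper does not isolate this point as sharply as you do, but it is the same mechanism.

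There is one genuine gap: you never verify that your $\mu$ is \emph{monotone}. The images $B_i$ from the optimal alignment are ordered, but after you shift starts by up to $\gamma_{\tau_i}$ and perturb lengths by up to $\eps\Delta_i$, it can happen that $s(\tilde B_{i+1}) < s(\tilde B_i)$ or $e(\tilde B_{i+1}) < e(\tilde B_i)$ when $G_i$ is small and the shifts go in opposite directions. The paper handles this explicitly: it first builds a possibly non-monotone $\hat\mu$, then repairs monotonicity by setting offending windows to $\perp$, charging the extra cost (at most $r_a \leq \eps\Delta_a$ per repaired window) back to the local edit cost. You will need an analogous step; fortunately the same charging argument works and adds only another $O(\eps k)$. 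A smaller omission: you should also set $\mu(a_i)=\perp$ when $a_i$ is entirely deleted by the optimal alignment (so $B_i=\emptyset$), rather than trying to find a length-$0$ grid window.
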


\begin{proof}

(Part 1.) Using $\mu$ we construct an explicit mapping from $A$ to $B$. For each window, $a \in \cA$, transform the characters of $a$ in $A$ into $\mu(a)$. If $\mu(a)=\perp$, then delete all the characters of $a$. Finally, if $a$ is not the last window, $\mu(a)\neq \perp$ and $e(\mu(a)) - s(\mu(\anext)) > 0$ then delete that many characters from the end of $\mu(a)$.

Let $\mu(A)$ be the currently transformed string. Then, by construction
\[ \ED(A,\mu(A))\leq \sum_{a \in \cA} \ED(a,\mu(a)) + \sum_{i:\#\{a \text{~s.t.~}i \in \mu(a)\} \geq 1 }  [\#\{a \text{~s.t.~}i \in \mu(a)\} -1]\]
Here the second term counts for all indices of $B$ that are counted more than once in $\mu$.

Furthermore, since we deleted any overlaps between $\mu(a)$s, we have $\mu(A)$ as a subsequence of $B$. Thus,
\[ \ED(\mu(A),B)=|B-\mu(A)|=\sum_{i:\#\{a \text{~s.t.~}i \in \mu(a)\} =0 }  [1-\#\{a \text{~s.t.~}i \in \mu(a)\} ]\]
Here we account for all indices of $B$ that are not counted in $\mu$.

Putting the above two inequalities together and using the triangle inequality, we get
\[\ED(A,B) \leq \ED(A,\mu(A))+ \ED(\mu(A),B) = \ED(\mu)\]

(Part 2.) Consider the optimal sequence of edits from $A$ to $B$. This can be viewed as $\ell$ substitutions
of characters of $A$, $k$ deletions of characters, and then $k$ insertions, where $2k+\ell=\ED(A,B)$. Let
$A'$ be the subsequence of characters of $A$ which are either untouched or substituted to match a character in $B$. Let $B'$ be the corresponding subsequence of
$B$. Let $\mu' : A' \rightarrow B'$ be the monotone correspondence between the characters of these solution. 

We now construct a mapping $\hat{\mu}: \cA \rightarrow \cB \cup \{\perp\}$ that will have low cost. $\hat{\mu}$ may not be monotone. Finally, we will convert $\hat{\mu}$ into a monotone mapping $\mu$ by negligibly increasing the cost, thus overall getting a good mapping. For each $a \in \cA$, if $a \cap A'=\emptyset$, then set $\hat{\mu}(a)=\perp$. 
For each window $a \in \cA$ which nontrivially intersects $A'$, let $i_a$ be the first index of $\mu'(a \cap A')$. Similarly, let 
$j_a$ be the last index of $\mu'(a \cap A')$. Note that $i_{\anext} > j.a$ for all $a$ with $\hat{\mu}(a) \neq \perp$ . 

Let $k_a$ denote the number of indices of $a$ that are deleted and $\ell_a$ denote the number of indices of $a$ that are substituted. Let $m_a$ denote the number of indices in $B$ between $i_a$ and $j_a$ that are not mapped from any indices of $A$ in $a \cap A'$ by $\hat{\mu}$. Then, it must hold that if $(j_a-i_a+1) \geq d$ then $m_a \geq [(j_a-i_a+1)-d]+k_a$. If $(j_a-i_a+1) < d$ then $k_a \geq [d-(j_a-i_a+1)]+m_a$.

When $(j_a-i_a+1) \geq d$, if $\frac{(1+\eps)^{j-1}}{d} \leq [(j_a-i_a+1) -d ] < \frac{(1+\eps)^{j}}{d}$ (note that such a $j$ always exists). Set $\hat{\mu}(a)$ to be the right-most interval in $\cH^{\tau}$, $\tau=\frac{(1+\eps)^{j-1}}{d}$, that contains $\mu'(A[i])$. In that case, the length of the interval $\hat{\mu}(a)$ is $d+(1+\eps)^{j-1}$.

On the other hand, if $(j_a-i_a+1) < d$ and $\frac{(1+\eps)^{j-1}}{d} \leq [d-(j_a-i_a+1)] <\frac{(1+\eps)^{j}}{d}$ (note that such a $j$ always exists). Set $\hat{\mu}(a)$ to be the right-most interval in $\cL^{\tau}$ that contains $\mu'(A[i])$. In that case, the length of the interval $\hat{\mu}(a)$ is $d-(1+\eps)^{j-1}$.


Note that $(i_a-s(\hat{\mu}(a)))\leq \eps\tau d$. We get

\begin{align*}\ED(a,\hat{\mu}(a)) & \leq \underbrace{m_a+k_a+l_a}_\text{True edits}+\underbrace{\eps(1+\eps)^{j-1}}_\text{Due to error in length estimation} + \underbrace{2(s(\mu(a))-i_a)}_\text{Shift due to the choice of grid points}\\
&\leq (1+3\eps) \tau d.
\end{align*}

Now note that, for all $a \in \cal A$, $j_a < i_{\anext}$. Let $r_a=(i_a-s(\mu(a)))$. Thus, we have $|e(\hat{\mu}(a))-s(\hat{\mu}(\anext))| \leq |r_a-r_{\anext}| + M(j_a, i_{\anext})$ where $M(j_a, i_{\anext})$ are the number of symbols of $B$ in between $j_a$ and $i_{\anext}$. Note that, in $\ED(A,B)$, all these symbols, $M(j_a, i_{\anext})$, are deleted from $B$ or equivalently inserted in $A$. If $r_a \geq r_{\anext}$ then we charge $|r_a-r_{\anext}|$ to $a$. Else, we charge it to $\anext$. Note that when $|r_a-r_{\anext}|$ is charged to $a$ then $|r_a-r_{\anext}| \leq r_a \leq \eps (m_a+k_a+l_a)$.


Hence we obtain
\begin{align*}
\ED(\hat{\mu}) &=\sum_{a\in \cA}\ED(a,\hat{\mu}(a))  
+ \sum_{i=1}^n \Big|\#\{a \text{~s.t.~}i \in \hat{\mu}(a)\} -1\Big|\\
&\leq (1+3\eps) \ED(A,B)+\sum_{a, \hat{\mu}(a) \neq \perp} 4r_a\\
&\leq (1+7\eps) \ED(A,B)
\end{align*}

It is possible that $\hat{\mu}$ may not be monotone. In particular, this can happen if $r_{\anext} > r_{a}$ so that $s(\hat{\mu}(\anext))$ comes before $s(\hat{\mu}(a))$. In this situation, we simple set $\mu(a)=\perp$ by paying at most $r_{\anext}$. In more details, we start with $\mu=\hat{\mu}$. We iterate over $a \in cA$ in increasing order of $s(a)$, whenever we encounter a window $a$ such that $s(\hat{\mu}(a)) < s(\hat{\mu}(\aprev))$, we find all $a'$ with $s(a') < s(a)$ but $s(\hat{\mu}(a)) < s(\hat{\mu}(a'))$, we set $\mu(a')=\perp$ by paying an extra edit cost of $r_{a}$.

Thus $$\ED(\mu) \leq \ED(\hat{\mu})+\sum_{a, \hat{\mu}(a) \neq \perp} r_a \leq (1+8\eps) \ED(A,B).$$

This completes the proof.
\end{proof}
\paragraph{Low-Skew Mapping.}
A monotone mapping $\mu: \cA \to \cB$ is said to have skew at most $D$ if for all $a, a'\in \cA$ we have:
$$ \frac{1}{D} \abs{s(a)-s(a')} \leq \abs{s(\mu(a))-s(\mu(a'))} \leq D \abs{s(a)-s(a')} $$
We next show that any monotone mapping $\mu$ can be transformed into a low-skew mapping with $D=\frac{1}{\eps}$ with negligible loss. 
Along with Lemma~\ref{lem:windowComptabile}, this ensures there exists a near-optimal low-skew mapping, which we will exploit in our algorithm design.

\begin{lemma}
\label{lemma:low-skew}
	For every monotone mapping $\mu:\cA \to \cB \cup \set{\perp}$, and for every $\eps>0$, 
	there exists a monotone mapping $\mu':\cA \to \cB \cup \set{\perp}$ such that:
	$\ED(\mu')\leq(1+2\eps)\ED(\mu)$ and $\mu'$ has a skew that is at most $1/\eps$.
\end{lemma}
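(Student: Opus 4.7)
The plan is to build $\mu'$ by a greedy sub-sampling of $\mu$. Scan $\cA$ from left to right and maintain a pointer $a^{\star}$ to the most recently kept window, initialized to the first $a\in\cA$ with $\mu(a)\neq\perp$ (which is included in $\mu'$). For every subsequent $a$ with $\mu(a)\neq\perp$, set $\mu'(a):=\mu(a)$ and update $a^{\star}\leftarrow a$ if
\[
s(\mu(a))-s(\mu(a^{\star}))\;\in\;\bigl[\,(s(a)-s(a^{\star}))/D,\;\;D(s(a)-s(a^{\star}))\,\bigr],\qquad D:=1/\eps,
\]
and otherwise set $\mu'(a):=\perp$; windows already at $\perp$ in $\mu$ remain $\perp$. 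Let $S$ denote the set of newly-dropped windows.

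The skew guarantee is immediate: consecutive kept windows satisfy the slope bounds by construction, and summing these inequalities along any chain of kept windows (using monotonicity of $s(\mu(\cdot))$) extends the bi-Lipschitz bound to every pair of kept windows, so $\mu'$ has skew at most $D=1/\eps$.

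The cost analysis proceeds by partitioning $S$ into maximal runs, each sandwiched between consecutive kept windows $a^{\star}_{\text{p}}$ and $a^{\star}_{\text{n}}$; write $\Delta:=s(a^{\star}_{\text{n}})-s(a^{\star}_{\text{p}})$. The extra cost of a run is at most $|S_{\text{run}}|\cdot d$ from the new $\perp$-penalties, plus at most $\sum_{a\in S_{\text{run}}}|\mu(a)|\le O(|S_{\text{run}}|\cdot d)$ from the coverage change (each removed image can turn at most $|\mu(a)|$ indices of $B$ from count~$1$ into count~$0$), for a total of $O(\Delta)$. By construction the run was triggered because every intermediate $a$ violated the slope with $a^{\star}_{\text{p}}$. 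In the \emph{steep} case, the $B$-range of length $\Omega(D\Delta)$ spanned by the two kept images is covered by dropped images of total length only $O(\Delta)$, so $\mu$ already pays coverage penalty $\Omega(D\Delta)$ on that region. In the \emph{shallow} case, the intermediate images crowd into a $B$-range of length $O(\Delta/D)$; combining the triangle-inequality bound $\sum_{a\in S_{\text{run}}}\ED(a,\mu(a))\ge\bigl|\,|S_{\text{run}}|\,d-C\,\bigr|$ (with $C:=\sum_{a\in S_{\text{run}}}|\mu(a)|$) with the identity $O-U=C-R$ (where $O$ and $U$ are total over- and under-coverage in the $B$-range and $R$ is its length) lower-bounds the $\mu$-cost on that region by $(1-1/D)\Delta-O(d)$. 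Either way the extra cost of the run is an $O(1/D)=O(\eps)$ fraction of $\mu$'s cost on a \emph{disjoint} $B$-region; summing over all runs gives $\ED(\mu')\le(1+O(\eps))\ED(\mu)$, which with $D=c/\eps$ for a suitable constant $c$ yields the claimed $(1+2\eps)\ED(\mu)$ bound.

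The main obstacle is the careful bookkeeping of the coverage penalty under removal of windows: $|\text{count}_i-1|$ can either decrease (helpful, when over-coverage is alleviated) or increase (harmful, when a gap is exposed). The identity $O-U=C-R$ is what allows us to convert \emph{either} direction of skew violation into a lower bound on the penalty already paid by $\mu$ on the affected $B$-region, which we then charge against the $O(\Delta)$ extra cost of the run.
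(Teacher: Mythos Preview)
Your greedy construction differs from the paper's: the paper first collects all high-skew pairs, extracts a maximal \emph{disjoint} family $T$ of such pairs under containment, and for each $(a_1,a_2)\in T$ sets $\mu'(a)=\perp$ on the entire interval $[a_1,a_2)$. The crucial feature is that the two \emph{endpoints} of each killed interval themselves form a high-skew pair, so one gets directly
\[
\ED\bigl(A[s(a_1),s(a_2)),\,B[s(\mu(a_1)),s(\mu(a_2)))\bigr)\;\ge\;\Bigl|\,|s(a_2)-s(a_1)|-|s(\mu(a_2))-s(\mu(a_1))|\,\Bigr|\;\ge\;(1/\eps-1)\,|s(a_2)-s(a_1)|,
\]
and the cost of killing the interval is at most this quantity plus $|s(a_2)-s(a_1)|$, yielding the $(1+O(\eps))$ factor immediately. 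Your greedy scheme instead produces runs whose \emph{endpoints} $a^{\star}_{\text p},a^{\star}_{\text n}$ are both kept and hence do \emph{not} violate the slope bound; the violation is only between $a^{\star}_{\text p}$ and the intermediate dropped windows. This is where your accounting breaks down.

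Concretely, in your shallow case you state: extra cost is $O(\Delta)$, and the $\mu$-cost on the region is at least $(1-1/D)\Delta - O(d)$. Dividing gives a ratio of $O(1)$, not the $O(1/D)=O(\eps)$ you then assert. To actually get the $O(\eps)$ ratio in the shallow case you would need to show that the \emph{net} extra cost is only $O(\Delta/D)$, which requires arguing that the over-coverage penalty already present in $\mu$ (from the images crowding into the small $B$-range) cancels most of the new $\perp$-penalty; the crude $O(\Delta)$ bound you wrote down does not do this. A second issue is that your steep/shallow dichotomy is not well-defined for a run: since only the slope to $a^{\star}_{\text p}$ is tested, a single run can contain both steep and shallow dropped windows, and your claim that the $B$-range spanned by the kept images has length $\Omega(D\Delta)$ in the steep case fails once $\perp$-windows or shallow windows are interspersed (indeed $a^{\star}_{\text n}$ being kept forces that span to be at most $D\Delta$, and it can be much smaller). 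The paper's maximal-pair construction sidesteps both difficulties because the high-skew condition sits on the interval endpoints themselves.
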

\begin{proof}
Consider $D=\frac{1}{\eps}$. Let $S \subset \cA \times \cA$ be the set of all
pairs $(a_1,a_2)$ which have more than $\frac{1}{\eps}$ skew,  and let $s(a_1) \leq s(a_2)$. We put a partial ordering on $S$ such that
$(a_1, a_2) \preccurlyeq (a_3, a_4)$ if $s(a_3) \leq s(a_1) \leq s(a_2) \leq s(a_4)$. Let $S' \subseteq S$ be the set of pairs that are maximal with respect to this relation.

We also say $(a_1, a_2)$ is disjoint from $(a_3, a_4)$ if either $s(a_2) \leq s(a_3)$ or $s(a_4) \leq s(a_1)$. We now build a set of pair-wise disjoint elements from $S$  by repeatedly picking a maximal element and discarding all elements in $S$ that intersects it. Let us call this set $T$. Label the elements of $T$ by $(a_1,a_2), (a_3,a_4),....,(a_{2k-1},a_{2k})$ such that $s(a_1) \leq s(a_2) \leq \cdots \leq s(a_{2k-1}) \leq s(a_{2k})$. For all windows $a$ such that $s(a_{2i-1}) \leq a < s(a_{2i})$, set $\mu'(a)=\perp$. Otherwise, keep $\mu'(a)=\mu(a)$. Note that $\mu'$ has skew at most $\frac{1}{\eps}$ since for each $(a,a') \in S$, either $\mu'(a)=\perp$  or $\mu'(a')=\perp$.
\sloppy
We now show that $\ED(\mu') \leq (1+\eps) \ED(\mu)$. Pick $(a_1,a_2) \in T$. If $|s(\mu(a_1))-s(\mu(a_2))| \geq \frac{1}{\eps} |s(a_1)-s(a_2)|$, then 
\begin{align*}
\ED(A[s(a_1), s(a_2)), B[\mu(s(a_1)), \mu(s(a_2))) &\geq |s(\mu(a_1))-s(\mu(a_2))|-|s(a_1)-s(a_2)| \\
&\geq |s(a_1)-s(a_2)|(\frac{1}{\eps}-1)\end{align*}
On the other hand, by setting for every $a \in \cA$ with $s(a_1) \leq s(a) < s(a_2)$, $\mu'(a)=\perp$, we pay an edit cost of at most 
\[ \ED(A[s(a_1), s(a_2)), B[\mu(s(a_1)), \mu(s(a_2)))+ |s(a_1)-s(a_2)| \leq (1+\frac{\eps}{1-\eps})\ED(A[s(a_1), s(a_2)), B[\mu(s(a_1)), \mu(s(a_2))).\] Note that the edit cost of $\mu$ restricted to the substring $A[s(a_1), s(a_2))$ is only higher than $\ED(A[s(a_1), s(a_2)), B[\mu(s(a_1)), \mu(s(a_2)))$ due to possible double-counting indices which are part of consecutive windows within the interval $[s(a_1), s(a_2))$ that overlap under $\mu$. 

If $|s(\mu(a_1))-s(\mu(a_2))| \geq \eps |s(a_1)-s(a_2)|$, then again, we have
\begin{align*}
\ED(A[s(a_1), s(a_2)), B[\mu(s(a_1)), \mu(s(a_2))) &\geq |s(a_1)-s(a_2)|-|s(\mu(a_1))-s(\mu(a_2))| \\
&\geq (1-\eps) |s(a_1)-s(a_2)|.\end{align*}
On the other hand, by setting for every $a \in \cA$ with $s(a_1) \leq s(a) < s(a_2)$, $\mu'(a)=\perp$, we pay an edit cost of at most 
\begin{align*}
&\ED(A[s(a_1), s(a_2)), B[\mu(s(a_1)), \mu(s(a_2)))+|s(\mu(a_1))-s(\mu(a_2))| \\
&\leq \ED(A[s(a_1), s(a_2)), B[\mu(s(a_1)), \mu(s(a_2)))+\eps |s(a_1)-s(a_2)| \\
&\leq (1+\frac{\eps}{1-\eps})\ED(A[s(a_1), s(a_2)), B[\mu(s(a_1)), \mu(s(a_2))).
\end{align*}

Now, going over all disjoint $(a_{2i-1}, a_{2i}) \in T$, we get for $\eps \leq \frac{1}{2}$
\[\ED(\mu') \leq (1+\frac{\eps}{(1-\eps)}) \ED(\mu) \leq (1+2\eps) \ED(\mu).\]
This completes the proof.
\end{proof}
Low-skew mapping will play an impartial role in our algorithm design and analysis.

\subsection{Reduction to estimating window costs.}\label{sub:reduction}


Let $\cE:\cA\times \cB \to \set{0,\dots, d}$ be an estimate of the edit distance such that  $\cE(a,b) \geq \ED(a,b)$ for all $a\in \cA,b \in \cB$. 
Given a monotone mapping $\mu:\cA\to\cB\cup \set{\perp}$,
we define its cost with respect to $\cE$ as follows:

$$ \ED_{\cE}(\mu)=\sum_{a\in \cA}\cE(a,\mu(a)) + \sum_{i=1}^n \Big|\#\{a \text{~s.t.~}i \in \mu(a)\} -1\Big|$$

We define $\ED(\cE)$ as the minimal cost $\ED_{\cE}(\mu)$ over all monotone mappings. 

Now, given such an estimation the next lemma (combining ideas from~\cite{Ukkonen85} and ~\cite{BEGHS18}) asserts that one can efficiently compute $\ED(\cE)$. 

Instead of computing $\ED(\cE)$ directly, we pick a threshold $\Delta$, and verify whether $\ED(\cE) \leq \Delta$. Indeed, if we can answer whether $\ED(\cE) \leq \Delta$, or $\ED(\cE) > (7+o(1))\Delta$ efficiently, then by increasing the threshold by an $(1+\epsilon)$ factor each time, we will be able to compute $\ED(\cE)$ within a $(7+o(1))$ approximation in $\log_{1+\eps}{n}$ iterations.

\begin{lemma}[Reduction to estimating window costs]\label{lem:DP} 
Given  an estimate $\cE$,  one  can  compute $\ED(\cE)$ in $O(\frac{1}{\eps} \frac{n^2}{d^2} \log{n})$-time.
\end{lemma}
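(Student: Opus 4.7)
The plan is to reduce to a decision problem via binary search over a threshold $\Delta$: guessing $\Delta$ in the geometrically-spaced values $(1+\eps)^j$, in $O(\eps^{-1}\log n)$ rounds I would check whether there exists a monotone mapping of $\ED_{\cE}$-cost at most $\Delta$, and output the smallest such $\Delta$. This gives a $(1+\eps)$-approximation of $\ED(\cE)$ which we can absorb into the final $7+o(1)$ ratio.

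For a fixed $\Delta$, I would set up a DP whose states are pairs $(i,b)\in [t]\times \cB$ and whose value $f(i,b)$ is the minimum cost of a monotone mapping of $a_1,\dots,a_i$ with $\mu(a_i)=b$, where the cost already incorporates the $B$-coverage penalty for positions $1,\dots,e(b)$. To avoid explicit enumeration of the possibly many $\perp$-mapped $A$-windows that may precede $a_i$, I would introduce the auxiliary value $g(i,b) := \min\bigl(f(i,b),\, g(i-1,b)+d\bigr)$, which compactly tracks the cost when the $\perp$-run extends by one more $A$-window (using $\ED(a,\perp)=d$). The recurrence is then
\[ f(i,b) = \cE(a_i,b) + \min_{b' \text{ precedes } b} \bigl[\, g(i-1,b') + |s(b) - e(b') - 1|\,\bigr], \]
with natural boundary terms for the first and last mapped windows, and the final answer is $\min_b\bigl(f(t,b)+(n-e(b))\bigr)$.

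For efficiency, as we sweep $i$ I would maintain two segment trees indexed by the position $e(b')$: one storing $g(i-1,b')-e(b')$ and one storing $g(i-1,b')+e(b')$. Then the absolute value $|s(b)-e(b')-1|$ splits into a range-min query over $\{b': e(b')<s(b)\}$ plus a symmetric one over $\{b': e(b')\ge s(b)\}$, each answered in $O(\log n)$. Because all windows inside a single layer $\cB^{\tau}$ share the same length, ``$b'$ precedes $b$'' for two windows of the same layer is captured purely by $s(b')\le s(b)$, so I would actually run one pair of segment trees per $\tau$-layer, keeping each query 1-dimensional. I would only touch states $(i,b)$ with $f(i,b)\le \Delta$, pruning immediately when the cost blows past the threshold.

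The main obstacle, and what drives the final running time, is bounding the total number of such states across the sweep by $O((n/d)^2/\eps)$ rather than the trivial $O(t\cdot |\cB|)$. Here I would invoke Lemma~\ref{lemma:low-skew}: losing only a $(1+o(1))$-factor, it suffices to consider mappings of skew at most $1/\eps$. Under this restriction, for each $i$ the position $s(b)$ must lie within a band of width $O(id/\eps)$ of the diagonal expectation $(i-1)d+1$, and within such a band each $\cB^{\tau}$-layer contributes only $O(\text{band width}/\gamma_\tau)$ windows; summing over the geometrically-spaced $\tau$-layers and then over $i\in[t]$ telescopes to $O(t^2/\eps)$ relevant states, so the total work including segment-tree operations is $O\bigl(\tfrac{1}{\eps}\cdot\tfrac{n^2}{d^2}\cdot\log n\bigr)$, as required.
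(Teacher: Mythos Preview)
Your running-time analysis has a real gap. The low-skew lemma bounds only \emph{relative} displacements $|s(\mu(a))-s(\mu(a'))|$; it says nothing about $|s(\mu(a_i))-s(a_i)|$ absolutely, so the claim that $s(b)$ must lie within $O(id/\eps)$ of the diagonal is unjustified (a globally shifted mapping has skew~$1$ but arbitrary offset). The correct band comes from the cost threshold itself: if $\ED_{\cE}(\mu)\le\Delta$ then the coverage penalty forces $|s(\mu(a_i))-s(a_i)|=O(\Delta)$, the Ukkonen argument.

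But even with the right $O(\Delta)$ band, your state count is still too large. The layers $\cB^\tau$ with small $\tau$ have $\gamma_\tau=1$, and there are $\Theta\bigl(\eps^{-1}\log(1/\eps)\bigr)$ of them, so a width-$\Delta$ band contains $\Theta(\Delta/\eps)$ windows from those layers alone. Summed over $i\in[t]$ this is $\Theta(t\Delta/\eps)=\Theta(n\Delta/(d\eps))$, which for $\Delta=\Theta(n)$ is $\Theta(n^2/(d\eps))$, a factor $d$ worse than the target $O(n^2/d^2)$. Your ``telescopes to $O(t^2/\eps)$'' step is simply not true once you account for the dense small-$\tau$ layers.

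The paper fixes this with one extra idea you are missing: for each threshold $\Delta$ it \emph{thins} the $\cB$-windows, keeping from each $\cB^\tau$ only every $\lfloor\gamma/\gamma_\tau\rfloor$-th window where $\gamma:=\Delta d/n$. This leaves $O(n/\gamma)$ windows total; combined with the Ukkonen band $|i-j|\le O(\Delta)$ the number of DP cells is
\[
O\Bigl(\tfrac{n}{d}\cdot\tfrac{n}{\gamma}\cdot\tfrac{\Delta}{n}\Bigr)=O\Bigl(\tfrac{n^2}{d^2}\Bigr).
\]
Thinning costs at most $2\gamma$ per matched window, hence $O(t\gamma)=O(\Delta)$ additive error overall, which is absorbed into the $(1+\eps)$ loss from the geometric search over~$\Delta$. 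With this in place the DP needs only $O_\eps(1)$ work per cell (no segment trees required); iterating over $O(\eps^{-1}\log n)$ thresholds gives the stated bound.
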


\begin{proof}
Pick a threshold $\Delta$. Set $\gamma=\frac{\Delta d}{n}$. If $\gamma_\tau \leq \gamma$, then from $\cB^{\tau}$ pick every $\lfloor \frac{\gamma}{\gamma_\tau}\rfloor$ windows so that gap between two consecutive windows in $\cB^{\tau}$ is $\geq \gamma-1$ for all $\tau$. Therefore, the total number of windows that we consider in $B$ is $O(\frac{n}{\gamma})$.

For window $w$, we let $e(w)$ denote the index of the last character of $w$; for a set $\cW$, $e(\cW)$ denotes the set of last indices. For $i \in e(\cW)$, we let $\iprev := \max\{i' \in e(\cW) \cup \{0\} \wedge i' < i\}$ denote the index of the previous finish in $\cW$ (or $0$ if such an index does not exist).

We abuse notation and let $\ED(i,j)$ denote the minimum cost of alignment ending at $i,j$ using estimates $\cE$.
Following~\cite{Ukkonen85} we use dynamic programming to fill a table of $\ED(i,j)$ for every pair $i,j \in e(\cA) \times e(\cB)$ such that $|i-j| \leq 10\Delta$.
Notice that the number of such pairs is bounded by:
\begin{gather}\label{eq:pairs} 
|\cA||\cB|\frac{10\Delta}{n}
= O\left( \frac{n}{d} \cdot \frac{n}{\gamma} \cdot \frac{\Delta}{n} \right) 
= O\left( \frac{n^2}{d^2}\right).
\end{gather}

For boundary conditions, we define $\ED(i,j) = \infty$ whenever $|i-j| > 10\Delta$, and $\ED(i,0) = \ED(0,i) = i$.

Consider a pair $i,j$ such that $e(a) = i$ (notice that there may be $O_{\eps}(1)$ $B$-windows ending at $j$).
The cost $\ED(i,j)$ of alignment ending at $i,j$ is given by taking the minimum of: 
\begin{itemize}
\item Cost of deleting the last $A$-window: $i-\iprev + \ED(\iprev,j)$; 
\item Cost of deleting the last several $B$-characters: $j-\jprev + \ED(\iprev,j)$; and
\item Cost of using a last pair of windows: $\min_{e(b) = j}\Big\{\cE(a,b) + \ED(s(a)-1,s(b)-1)\Big\}$.
\end{itemize}

Notice that the runtime of our algorithm is dominated by the number of pairs~\ref{eq:pairs}, aka it is $O( \frac{n^2}{d^2})$. Now, considering all choices of $\Delta$, we get the required running time.


\end{proof}

\subsection{Close-window graphs and the preprocessing phase}
\label{subsec:algo}


For 
subset of windows $\cC$ we define the 
graph $G_{\cC,\tau}$ as follows: The vertex set equals $\cC$. 
The pair $(c,c')$
is connected by an edge if $\ED(c,{c'})\le \tau d$. 
For a substring $z\in \Sigma^d$, we denote by $\cN^{\cC, \tau}(z)$ 
the set of all windows $c\in \cC$ satisfying $\ED(c,z)\le \tau d$. The windows in $\cN^{\cC, \tau}(z)$ will also be referred to as the $\tau$-neighbors of $c$ in $\cC$, and $|\cN^{\cC, \tau}(z)|$ as its degree in $\cC$. When it is clear from the context, we will often omit $\tau$, and simply use the terms such as neighbors and degree of $c$. We will abuse notation and also use this definition for $z \notin \cC$. 

%

\paragraph{Preprocessing Phase algorithm.} 

The preprocessing algorithm crucially uses the algorithm for computing {\em small edit distance} with preprocessing from Section~\ref{sec:small-preprocess}. In the preprocessing phase, strings $A$ and $B$ are processed separately.
Let $\tau \in \{0,\frac{(1+\epsilon)^i}{d}\}$ for $i=[0..\log_{(1+\eps)} d]$. 

\paragraph{Preprocessing $A$.} 
For each $\tau$, the preprocessing algorithm computes the graph $G_{\cA, \tau}$. The number of windows of $A$ is $\frac{n}{d}$. Hence, the preprocessing time over all $\tau$ is $O(\frac{n^2}{d^2}d^2)=O(n^2)$.

\paragraph{Preprocessing $B$.}
The preprocessing algorithm computes $G_{\cB^{\tau}, \tau}$ for each $\tau$. By the preprocessing algorithm of Section~\ref{sec:small-preprocess}, we can process entire $B$ in $O(n\log{n})$ time so that the computation of $\tau d$-thresholded edit distance between any pair of windows can be run in $O(\tau^2d^2\log{n})$ time.

Note that for a given $\tau$, the gap between two consecutive windows in $B$ is $\gamma_\tau=\max{(1, \lfloor{\epsilon \tau d}\rfloor)}$. Therefore, when $\tau=0$, the number of windows is $O(n^2)$, but for every pair of windows, edit distance computation time is $O(1)$. For $\tau > 0$, the number of windows is $O(\frac{n}{\eps \tau d})$. Hence, the computation time is $O(\frac{n^2}{\eps^2 \tau^2d^2} \tau^2d^2\log{n})=O(\frac{n^2}{\eps^2})$. Thus, over all $\tau$, the total preprocessing time for $B$ is $\tilde{O}_{\eps}(n^2)$.



\subsection{Query Phase algorithm} 
The input to the query phase algorithm is the two strings, 
as well as the close-window graphs computed in the preprocessing phase.
The output is an estimate data structure that can answer  $\cE:\cA \times\cB \to \R^+$ queries in $O(\log(n))$ time. 
We implicitly initialize $\cE(a,b)\leftarrow \infty$ 
for all pairs $(a,b)$.



We consider all choices of $\tau \in \{0,\frac{(1+\epsilon)^i}{d}\}$ for $i=[0..\log_{(1+\eps)} d]$. 
For $\tau$ we run the following algorithm that attempts to discover the pairs of windows $a,b \in \cA \times \cB^{\tau}$ of edit distance at most $\tau$.
The estimate algorithm has some false negatives, and it may also have false positives whose true edit distance is up to $7\tau$.
The estimate can be fed into the DP in Section~\ref{sub:reduction}.

Recall that $\ttau:= |\cB^{\tau}|$ denote the number of windows in $\cB^{\tau}$, and $\ttau=\frac{n}{\eps\tau d}$.


For each value of $\tau$, the algorithm below uses $O(\ttau^{4/3+o(1)})$ queries to edit distance of pairs of windows of length $O(d)$ of the form is $\ED(a,b)<\tau d$. Using the algorithm from Theorem~\ref{thm:small-edit}, each query can be answered in $\tilde{O}(d^2 \tau^2)$ time. Hence the total run time is given by 
\begin{gather*}
O\Big(\ttau^{4/3+o(1)} \cdot d^2 \tau^2\Big) = O(n^{3/2+o(1)}).
\end{gather*}


%


\paragraph{Initialization: Covered windows}
Initially, all windows are uncovered.
Intuitively, we say that a window is covered when we have upper bounded the edit distance to its relevant neighbors in $\cB^{\tau}$.

\subsubsection*{A: Intervals}
Consider a partition of $[n]$ into $\ttau^{1/3+\eps}$ contiguous intervals of length $n/\ttau^{1/3+\eps} \leq \ttau^{2/3-\eps} \cdot d$. For $A$ we define the {\em $\cA$-interval} $I_{\cA}$ corresponding to interval $I \subset [n]$ as the set of $\leq \ttau^{2/3-\eps}$ windows with indices in $I$. Therefore, for $A$-windows they are either entirely contained in the interval or don't intersect it.
For $B$, we let $I/\eps$ denote a $1/\eps$-factor expansion of $I$ (i.e. the interval of length $|I|/\eps$ centered at $I$)%
\footnote{For example, if $I = [20,30]$ then its $3$-expansion is $[10,40]$.}. We define the $\cB^{\tau}$-interval $I_{\cB^{\tau}}$ to be the set of  windows that intersect $I/\eps$. When clear from context we sometimes just call $I_{\cA},I_{\cB^{\tau}}$ intervals.
	
\subsubsection*{B: Sampling seeds}	
For each $\cA$-interval $I_{\cA}$, if less than $\log^2(n)$ windows in $I_{\cA}$ remain uncovered, we simply find all of their $\tau$-neighbors in $\cB^{\tau}$ using $\tilde{O}(t_{\tau})$ queries and mark them covered.
Otherwise we sample $\log^2(n)$ uncovered windows 
from $I_{\cA}$. 
For each sampled window $a$, we test whether $\abs {\cN^{\cB^\tau,\tau}(a)}\gtrsim \ttau^{1/3}$. 
This is done by sampling $\ttau^{2/3} \log^2 (n)$ windows $b\in \cB^{\tau}$ and querying $\ED(a,b)$ for each. (We account for those queries later, depending on whether $a$ is dense or sparse.)

If more than $\log^2 (n)/2$ of the samples belong to $\cN^{\cB,\tau}(a)$  then $a$ is declared dense, otherwise it is sparse. 
If the window is dense we process it as described below, after which it is a covered window and no longer a good sample.
We then continue to sample (in random order) other uncovered windows from the same $I_{\cA}$ until: If the number of sparse windows sampled so far is smaller than $\log^2(n)$, we stop sampling whenever we see $\log^2(n)$ consecutive covered windows. Otherwise (the number of observed sparse sampled windows is at least $\log^2(n)$), we stop sampling after querying $\log^2(n)$ consecutive windows which are all either sparse or covered.

\remark{Therefore, if we keep discovering dense windows, we process it as in Step C, the window gets covered, and we keep on sampling more uncovered windows.}

\subsubsection*{C: Dense windows}
Suppose that $a$ is dense; choose (arbitrarily) $b\in \cN^{\cB^{\tau}, \tau}(a)$ among those discovered during Step B while processing $a$. 
	For each windows-pair $a'\in \cN^{\cA, 2\tau}(a)$ and $b'\in \cN^{\cB^{\tau}, 4\tau}(b)$, 
	whose estimate has not been computed yet, the algorithm sets $\cE(a',b')\leftarrow 7\tau$. 
	This is done abstractly by pointing each $a'$ to $a$, $b'$ to $b$ and marking that $\ED(a,b)\le \tau$.
	Observe that the sets $\cN^{\cA,2\tau}(a), \cN^{\cB^{\tau}, 4\tau}(b)$ have already been computed during the preprocessing phase. 
	We mark each window in the set $\cN^{\cA, 2\tau}(a)$ as covered.

\paragraph{Approximation:}
Observe that every $a' \in \cN^{\cA, 2\tau}(a)$ is indeed covered in the sense that by triangle inequality, for every $b'' \in \cN^{\cB, \tau}(a')$
\begin{gather}\label{eq:triangle-4}
\ED(b,b'') \leq \ED(b'',a') + \ED(a',a) + \ED(a,b) \leq 4\tau,
\end{gather}
and hence $\cN^{\cB, \tau}(a') \subseteq \cN^{\cB, 4\tau}(b)$.
Similarly, by triangle inequality for every $b' \in \cN^{\cB, 4\tau}(b)$ and $a' \in \cN^{\cB, 2\tau}(a)$, we have that
\begin{gather}\label{eq:triangle-7}
\ED(a',b') \leq \ED(a',a) + \ED(a,b) + \ED(b,b') \leq 7\tau.
\end{gather}

\paragraph{Complexity:}	
Notice that if the $(\cB^{\tau},\tau)$-neighborhoods of two dense windows (or one dense and one sparse) $a$ and $a'$ intersect, then when we process one of them as dense we will cover both.
Hence we only need to run the dense subroutine at most $\ttau^{2/3}$ times. 
Each run requires $\tilde{O}(\ttau^{2/3})$ queries, and hence in total over the entire $\tau$-th iteration we only need $\tilde{O}(\ttau^{4/3})$ queries.

\subsubsection*{D: Sparse windows}
For each interval $I_\cA$, out of the set of windows $a\in I_\cA$ which were declared sparse, we pick at random a set $S(I_\cA)$ of size $\log^2(n)$. 
For every window in $S(I_\cA)$, we query its entire $(\cB^{\tau},\tau)$-neighborhood using $\ttau$ queries. 
For each interval $I_{\cA}$ we record the union of all $\tilde{O}(\ttau^{1/3})$ intervals $\widehat{I_{\cB^{\tau}}}$ that contain any $(\cB^{\tau},\tau)$-neighbors of any of the sparse samples $a\in S(I_\cA)$. We call these $\cB$-windows the {\em relevant windows} for the windows in $I_{\cA}$.
We henceforth no longer look to match windows from $I_{\cA}$  to irrelevant $\cB$-windows. 
Note that in a low-skew mapping (for more precise statement, see Lemma~\ref{lem:approx}),  windows in $I_{\cA}$ cannot be mapped to any irrelevant $\cB$-windows under that mapping.
(Hence in total across all $\ttau^{1/3+\eps}$ intervals the sparse samples take $\tilde{O}(\ttau^{4/3+\eps})$ queries.)

\paragraph{Approximation:}
Recall that by Lemma~\ref{lem:windowComptabile} and Lemma~\ref{lemma:low-skew}, there is a low-skew monotone mapping that approximates the optimal transformation to within $(1+\eps)$-factor. 
For any low-skew monotone mapping $\mu$, the entire interval $I_{\cA}$  is mapped to a single $\cB^{\tau}$-interval $I_{\cB^{\tau}}$.
Suppose that $(1-\eps)$-fraction of the sparse windows in $I_{\cA}$ are mapped to $\cB^{\tau}$-windows (or $\perp$) of distance greater than $\tau$.
Then we can safely discard the $\tau$-edges for the remaining $\eps$-fraction of sparse windows with negligible loss in approximation factor. 
Hence in total we pay only $(1+O(\eps))$-factor in approximation for sparse windows.
Otherwise, w.h.p. at least one of the samples has a $(\cB^{\tau},\tau)$-neighbor in $I_{\cB^{\tau}}$. 
For more details, see Lemma~\ref{lem:approx}.

\paragraph{Complexity:}
Each uncovered $\cA$-window has only $\tilde{O}\Big(\ttau^{1/3} \cdot \ttau^{2/3-\eps}\Big)= \tilde{O}(\ttau^{1-\eps})$ relevant windows.

\subsubsection*{Recursion}
We recurse on Parts A-D of the algorithm, with the following modifications for the $\ell$-th level of the recursion.
\begin{itemize}
\item We increase the number of intervals to   $\ttau^{1/3+(\ell+1)\eps}$, and their size decreases accordingly to $\tilde{O}(\ttau^{2/3-(\ell+1)\eps})$.
\item We only sample relevant windows when we estimate degrees. The degree-threshold for a window to be considered ``dense'' remains $\ttau^{1/3}$. Notice that a window may be dense with respect to the entire graph, but sparse with respect to its relevant windows. 

\item Once we discover a dense window, we run Part C without regard to relevant/irrelevant windows. In particular the calculation of total number of queries spent on dense windows is global for the entire $\tau$-th iteration of the algorithm, including recursion.
\item For each sparse sample, we only compute the restriction of its $(\cB^{\tau},\tau)$-neighborhood to relevant windows. Hence we only spend $\tilde{O}\Big(\ttau^{1-\ell\eps}\Big)$ queries for each sample, or a total of $\tilde{O}\Big(\ttau^{4/3+\eps}\Big)$ queries across all intervals.
\item The relevant windows for the next level of recursion are a (strict) subset of the relevant windows in the current level.
\end{itemize}

The recursion continues until each interval has less than $\log^2(n)$ windows, 
after which all windows are covered.

We now summarize the approximation factor and the complexity of the query algorithm.

\begin{lemma}[Time Complexity]\label{lem:timequery}
	Let $A,B\in \Sigma^n$. Then running time of the query phase is bounded by: $\tilde{O}_{\eps}(n^{3/2+\eps})$.
\end{lemma}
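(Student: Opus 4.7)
The plan is to bound the total query-phase running time by (i) counting the edit-distance queries the algorithm issues per value of $\tau$, (ii) multiplying by the $\tilde O(d^2\tau^2)$ per-query cost from Theorem~\ref{thm:small-edit}, and (iii) summing over the $O(\log_{1+\eps} d)$ geometric choices of $\tau$. The dynamic program of Lemma~\ref{lem:DP} only contributes $\tilde O(n^2/d^2) = \tilde O(n^{3/2})$ time and is absorbed into the final bound.

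For a fixed $\tau$, the recursion has depth $O(1/\eps)$ since each level shrinks the interval size by a factor of $\ttau^\eps$ and halts once intervals contain fewer than $\log^2 n$ windows. At each level, the sparse subroutine dominates: the $\ttau^{1/3+(\ell+1)\eps}$ intervals each contribute $\log^2 n$ seeds, and each seed probes $\tilde O(\ttau^{1-\ell\eps})$ relevant neighbors, so the per-level cost is $\tilde O(\ttau^{4/3+\eps})$ queries. Degree-estimation samples cost only $\tilde O(\ttau^{2/3})$ queries per seed, a lower-order term. For the dense subroutine, inequality~\eqref{eq:triangle-4} shows that whenever two dense windows' $(\cB^\tau,\tau)$-neighborhoods intersect, processing one covers the other; since each dense window has $\Omega(\ttau^{1/3})$ neighbors in $\cB^\tau$, at most $\tilde O(\ttau^{2/3})$ dense invocations occur across the whole $\tau$-iteration (including the recursion), each costing $\tilde O(\ttau^{2/3})$ sample queries for degree estimation, for a total of $\tilde O(\ttau^{4/3})$. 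Adding these bounds and summing over the $O(1/\eps)$ recursion levels yields $\tilde O_\eps(\ttau^{4/3+\eps})$ queries per $\tau$.

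Each such query is a threshold-$\tau d$ edit-distance computation between two length-$O(d)$ windows, which runs in $\tilde O(d^2\tau^2)$ time by Theorem~\ref{thm:small-edit} using the hashes stored in the preprocessing phase. Substituting $d = n^{1/4}$ and $\ttau = \Theta(n^{3/4}/(\eps\tau))$ gives
\[
\tilde O_\eps\!\left(\ttau^{4/3+\eps}\cdot d^2\tau^2\right)
= \tilde O_\eps\!\left((n^{3/4}/\tau)^{4/3+\eps}\cdot n^{1/2}\cdot \tau^2\right)
= \tilde O_\eps\!\left(n^{3/2+O(\eps)}\cdot \tau^{\,2/3-O(\eps)}\right)
\]
per value of $\tau$, which is $\tilde O_\eps(n^{3/2+O(\eps)})$ since $\tau \le 1$. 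Summing over the $O(\log n)$ geometric choices of $\tau$ loses only a polylog factor, and rescaling $\eps$ by a constant absorbs the $O(\eps)$ in the exponent, yielding the stated $\tilde O_\eps(n^{3/2+\eps})$ bound. The main subtlety that will require care is the global $\tilde O(\ttau^{2/3})$ cap on dense invocations across all recursion levels: a window that is sparse at a shallow level may become a dense seed at a deeper level, but density at any level still implies $\Omega(\ttau^{1/3})$ true neighbors in $\cB^\tau$ (the relevant set is a subset of $\cB^\tau$), and the ``covered'' flag consistently propagates downward, so the disjointness argument carries through unchanged.
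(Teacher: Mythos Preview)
Your proposal is correct and follows essentially the same argument as the paper: both proofs separately bound the dense queries by $\tilde O(\ttau^{4/3})$ globally and the sparse queries by $\tilde O(\ttau^{4/3+\eps})$ per recursion level (with $O(1/\eps)$ levels), then multiply by the $\tilde O(d^2\tau^2)$ per-query cost and sum over $\tau$. Your explicit tracking of the $\tau^{2/3-O(\eps)}$ factor and your remark on why the global $\tilde O(\ttau^{2/3})$ cap on dense invocations survives the recursion are slightly more detailed than the paper's version, but the approach is the same.
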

\begin{proof}
Fix  $\tau$.  By the complexity analysis of Step C, the total number of queries required to cover dense windows over all the recursion steps is $\tilde{O}(\ttau^{4/3})$. 

On the $\ell$-th level of recursion, the number of intervals is $\ttau^{1/3+(\ell+1)\eps}$. For every interval, we pick at most $\log^2{n}$ sparse windows, and query all relevant windows. The number of relevant windows is $\tilde{O}(\ttau^{1-\ell\eps})$. Therefore, on the $\ell$-th level of recursion, the number of queries spent on sparse windows is $\ttau^{4/3+\eps}$. Since the number of levels of recursion is at most $\frac{1}{\eps}+1$, the total number of queries spent on sparse windows is $O(\frac{\ttau^{4/3+\eps}}{\eps} )$. 

Computing $\tau d$-thresholded edit distance between pairs of windows requires time $\tilde{O}(\tau^2d^2)$ (using our algorithm from Section~\ref{sec:small-preprocess}). Therefore, the time complexity for a given $\tau$ over all dense and sparse windows is $\tilde{O}(\frac{\ttau^{4/3+\eps}\tau^2d^2}{\eps} )=\tilde{O}(\frac{n^{3/2+\eps}}{\eps^{7/4}})$. 

Since, the number of choices of $\tau$ is $O(\frac{\log{n}}{\eps})$ and the time to run the DP from Section~\ref{sub:reduction} is $O(\frac{1}{\eps}n^{3/2}\log{n})$ , the overall total time complexity is $\tilde{O}(\frac{n^{3/2+\eps}}{\eps^{15/4}})$. 
\end{proof}

\begin{lemma}[Approximation]\label{lem:approx}
Let $A,B\in \Sigma^n$. Let $\cE:\cA\times \cB \to \R^{+}$ be the cost function produced during the query phase. 
	Then with probability at least $1-\frac{1}{n}$, we have:
	\[ \ED(\cE)\le (7+\eps)\ED(A,B).\]
\end{lemma}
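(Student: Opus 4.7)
The plan is to build a modified monotone mapping $\mu'$ from a near-optimal low-skew mapping and show that $\ED_{\cE}(\mu') \le (7+O(\eps))\ED(A,B)$. By Lemmas~\ref{lem:windowComptabile} and~\ref{lemma:low-skew}, there exists a monotone mapping $\mu^*$ with skew at most $1/\eps$ satisfying $\ED(\mu^*) \le (1+O(\eps))\ED(A,B)$. The first observation I would record is that the estimate $\cE$ is always an upper bound on the true edit distance: whenever Step C assigns $\cE(a',b')\leftarrow 7\tau d$, inequality~(\ref{eq:triangle-7}) guarantees $\ED(a',b')\le 7\tau d$. Consequently, to upper bound $\ED_{\cE}(\mu')$ it suffices to show that, for most $a\in \mu^*$, the value $\cE(a,\mu^*(a))$ is set to at most $7\tau d$, where $\tau$ is the smallest scale with $\ED(a,\mu^*(a))\le \tau d$; the rest can be re-routed to $\mu'(a)=\perp$ at cost $d$ each, which we will show fits inside the $(1+O(\eps))$ budget.

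I would analyze each value of $\tau$ separately. Partition the pairs $(a,\mu^*(a))$ into $\ttau^{1/3+\eps}$ groups according to the $\cA$-interval $I_{\cA}$ containing $a$. Because $\mu^*$ has skew at most $1/\eps$, the $\mu^*$-images of all windows in $I_{\cA}$ that lie at $\tau$-distance from their image are contained in one $\cB^\tau$-interval $I_{\cB^\tau}$ (this is the reason we defined $I_{\cB^\tau}$ via a $1/\eps$-expansion). Now I would do a level-by-level induction on the recursion, and at each level distinguish two cases. In the \emph{dense} case, whenever an $A$-window $a$ in $I_\cA$ gets processed as dense with witness $b$, the triangle-inequality statements~(\ref{eq:triangle-4})--(\ref{eq:triangle-7}) imply that every $a'\in \cN^{\cA,2\tau}(a)$ with $\ED(a',\mu^*(a'))\le \tau d$ satisfies $\mu^*(a')\in \cN^{\cB^\tau,4\tau}(b)$ and hence $\cE(a',\mu^*(a'))$ gets assigned a value at most $7\tau d$, yielding an absolute factor-$7$ blow-up on those pairs. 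In the \emph{sparse} case, if at most $\eps$-fraction of the $A$-windows in $I_\cA$ are genuinely mapped by $\mu^*$ to $\cB^\tau$-distance $\le \tau d$, I would reroute those $a$ to $\mu'(a)=\perp$, paying at most $\eps\cdot|I_\cA|\cdot d$, which is absorbed into a $(1+\eps)$ slack on top of $\ED(\mu^*)$. Otherwise a uniformly random sample of $\log^2(n)$ sparse candidates contains, by a standard Chernoff bound, at least one window whose $(\cB^\tau,\tau)$-neighborhood intersects $I_{\cB^\tau}$; thus the relevant-window set recorded for $I_\cA$ already contains every $\mu^*(a)$ belonging to an as-yet-sparse window in $I_\cA$, so the induction extends cleanly to the next recursion level.

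The recursion halts after $O(1/\eps)$ levels when intervals are small enough to be handled directly, and at that point every remaining uncovered window has its entire relevant neighborhood queried. A union bound over the $\operatorname{poly}(n)$ sampling events (each with failure probability $n^{-\Omega(1)}$ from the Chernoff bound) gives overall success probability at least $1-1/n$. Summing losses: the dense portion inflates contributions by a factor $7$, the recursive sparse re-routings inflate by $(1+\eps)^{O(1/\eps)}=1+O(\eps)$ overall, and the outer appeal to Lemmas~\ref{lem:windowComptabile} and~\ref{lemma:low-skew} contributes another $(1+O(\eps))$; combining, $\ED_{\cE}(\mu') \le (7+O(\eps))\ED(A,B)$. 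Finally, $\ED(\cE)\le \ED_{\cE}(\mu')$ by definition, which yields the statement.

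The main obstacle I foresee is the recursive bookkeeping for the sparse case: a window $a$ may become \emph{covered} not because it was itself processed, but because some neighboring $A$-window was declared dense. I need to verify that in that situation $\cE(a,\mu^*(a))$ still receives the value $7\tau d$, which requires showing that $\mu^*(a)\in \cN^{\cB^\tau,4\tau}(b)$ whenever $a\in \cN^{\cA,2\tau}(a_{\text{dense}})$ and $\ED(a,\mu^*(a))\le \tau d$; this is exactly what the two triangle-inequality displays were set up to provide, but it must be tracked carefully across levels so that no pair of interest is ever silently forgotten. The low-skew property is what ensures that a window $a\in I_\cA$ with $\ED(a,\mu^*(a))\le \tau d$ always has $\mu^*(a)$ inside the recorded relevant-window set for $I_\cA$ at every level, so the induction hypothesis propagates without deterioration.
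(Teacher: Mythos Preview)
Your overall architecture is the same as the paper's: fix a near-optimal low-skew monotone mapping $\mu^*$ via Lemmas~\ref{lem:windowComptabile} and~\ref{lemma:low-skew}, then for each $\tau$ and each $\cA$-interval run the dense/sparse dichotomy, invoking the triangle inequalities~\eqref{eq:triangle-4}--\eqref{eq:triangle-7} for the dense case and the low-skew-plus-sampling argument for the sparse case. Your write-up is in fact more explicit about the level-by-level induction than the paper, which argues only the first recursion level and then asserts the claim follows. The obstacle you flag at the end (that a window may become covered indirectly via a dense neighbor) is exactly the right thing to check, and~\eqref{eq:triangle-4} does resolve it.

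There is one genuine slip in the accounting. You write that ``the recursive sparse re-routings inflate by $(1+\eps)^{O(1/\eps)}=1+O(\eps)$,'' but $(1+\eps)^{1/\eps}\to e$, so $(1+\eps)^{O(1/\eps)}$ is a constant bounded away from $1$, not $1+O(\eps)$. If the sparse losses really compounded multiplicatively across the $O(1/\eps)$ recursion levels, your final factor would be $7\cdot e^{O(1)}$ rather than $7+O(\eps)$, and the lemma would not follow. The repair is to observe that rerouting is a one-shot event per window: once you set $\mu'(a)=\perp$, that window is out of the picture and contributes nothing at deeper levels. Hence the rerouting costs are \emph{additive} across levels rather than multiplicatively compounding, and what you must actually bound is the total number (equivalently, total $d$-cost) of rerouted windows against $O(\eps)\cdot\ED(\mu^*)$. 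The paper's own proof is equally terse on this point, so this is not a divergence in strategy but a place where both arguments need the same extra sentence of care.
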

\begin{proof}
Note that if we can construct $\cE:\cA\times \cB \to \set{0,\dots, d}$ such that $\ED(a, b) \leq \cE(a, b)\leq 7\ED(a, b)$, then using the DP algorithm from Section~\ref{sub:reduction} and employing Lemma~\ref{lem:windowComptabile}, we get a $7+o(1)$ approximation for $\ED(A,B)$. Moreover, by Lemma~\ref{lemma:low-skew}, it is only required to compute all the edges of $\cE$ with the above accuracy which an optimum low-skew monotone mapping $\mu$ would use. Fix such a mapping $\mu$.

We prove that for the first level of the recursion, for each interval $I\in I_A$ it is either the case that there exists a sparse window $a$ such that: 
$\mu(a)\in \cN^{\cBt,\tau}(a)$, or that the covered dense windows provide a good approximation for the edges used by $\mu$. 
Indeed, fix $I\in I_A$, the proof proceeds by case analysis.

\textbf{Case 1:} Suppose that there exists $\tau$ such that at least $\eps$-fraction of $a\in I_\cA$, are such that $\mu(a)\in \cN^{\cBt,\tau}(a)$ and $a$ is $\tau$-sparse. 

In this case, with high probability the algorithm will eventually pick a window $a\in I_\cA$ such that $\mu(a)\in \cN^{\cBt,\tau}(a)$ and $a$ is $\tau$-sparse.
Consider the set $\widehat{I_{\cB^{\tau}}}$ recorded by the algorithm. Since $\mu$ is a low-skew mapping, one of the intervals $I_\cB\in \widehat{I_{\cB^{\tau}}}$  is such that all the edges $(a,\mu(a))$ where $a\in I_\cA$, are such that $\mu(a)\in I_\cB$, and hence declared relevant. Therefore, in further iterations of the algorithm these edges will be assigned with the required approximation guarantee.

\textbf{Case 2:} Suppose that for all $\tau$ at most $\eps$-fraction of $a\in I_\cA$, are such that $\mu(a)\in \cN^{\cBt,\tau}(a)$ and $a$ is $\tau$-sparse.

In this case we may fail to detect all the edges $(a,\mu(a))$, where $\mu(a)\in \cN^{\cBt,\tau}(a)$ and $a$ is $\tau$-sparse. Nevertheless, in that case, even if we map {\em all } these edges  to $\perp$, we only lose a $(1+\eps)$ factor in the edit distance. As for the rest of the windows $a\in I_\cA$, we claim that with high probability for at least $1-\eps$ of the windows $a$ we have: $\cE(a,\mu(a))\le 7\ED(a,\mu(a))$.

Indeed, observe that whenever the algorithm completes step B, then it is the case that with high probability all but at most $\eps$-fraction of dense windows are already covered. If this is the case, then for each covered window $a\in I_\cA$ we have: $\cE(a,\mu(a))\le 7\ED(a,\mu(a))$. For the rest we have no guarantee on $\cE(a,\mu(a))$. However, even if we map {\em all } these edges  to $\perp$, we only lose a $(1+\eps)$ factor in the edit distance. The claim follows.
\end{proof}
We therefore have the following theorem.

\begin{theorem}
\label{thm:large-edit}
Given two strings $A, B \in \Sigma^n$, we can approximate $\ED(A, B)$ within $7+o(1)$ approximation with probability at least $1-\frac{1}{n}$ with a preprocessing time of $\tilde{O_{\eps}}(n^2)$ and query time of $\tilde{O_{\eps}}(n^{3/2+o(1)})$.
\end{theorem}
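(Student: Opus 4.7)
The plan is to assemble the four ingredients developed throughout Section~\ref{sec:approx}. First I would invoke the preprocessing of Subsection~\ref{subsec:algo}: for each $\tau \in \{0\} \cup \{(1+\eps)^i/d\}_{i=0}^{\log_{1+\eps} d}$, compute $G_{\cA,\tau}$ in time $O(n^2)$ by applying Theorem~\ref{thm:small-edit} to each of the $(n/d)^2$ pairs of $\cA$-windows, and compute $G_{\cB^{\tau},\tau}$ in time $\tilde{O}_{\eps}(n^2)$ (the $O(n^2/(\eps^2\tau^2 d^2))$ window pairs multiplied by the $\tilde{O}(\tau^2 d^2)$ per-pair cost guaranteed by Theorem~\ref{thm:small-edit}, after a single $O(n\log n)$ invocation of the small-edit preprocessing of Section~\ref{sec:small-preprocess}). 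Summing over the $O(\eps^{-1}\log n)$ choices of $\tau$ yields the claimed $\tilde{O}_{\eps}(n^2)$ preprocessing.

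Second, I would run the query phase of Subsection~\ref{subsec:algo} (Parts A--D together with the recursion) for every $\tau$, producing the estimate $\cE : \cA \times \cB \to \R^+$. Its running time is exactly Lemma~\ref{lem:timequery}, namely $\tilde{O}_{\eps}(n^{3/2+\eps})$, which already absorbs the $\tilde{O}(\ttau^{4/3})$ queries spent globally on dense windows, the $\tilde{O}(\ttau^{4/3+\eps})$ queries spent on sparse windows across all $O(1/\eps)$ levels of recursion, and the $\tilde{O}(\tau^2 d^2)$ cost of each window-pair query. To turn $\cE$ into a numerical value, I would feed it into the dynamic program of Lemma~\ref{lem:DP}, which outputs $\ED(\cE)$ in time $\tilde{O}_{\eps}(n^{3/2})$ and is therefore dominated by the query phase.

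The only delicate step is the approximation guarantee, obtained by chaining three facts. (a)~By Part~1 of Lemma~\ref{lem:windowComptabile}, every monotone window-compatible mapping $\mu$ satisfies $\ED(\mu) \ge \ED(A,B)$; since $\cE(a,b) \ge \ED(a,b)$ pointwise by construction, this upgrades to $\ED(\cE) \ge \ED(A,B)$. (b)~By Part~2 of Lemma~\ref{lem:windowComptabile} followed by Lemma~\ref{lemma:low-skew}, there exists a low-skew monotone mapping $\mu^*$ whose true cost is at most $(1+O(\eps))\ED(A,B)$. (c)~By Lemma~\ref{lem:approx}, with probability at least $1-1/n$ the estimate satisfies $\ED(\cE) \le (7+\eps)\ED(A,B)$, using the $7\tau$ triangle-inequality bound \eqref{eq:triangle-7} in the dense subroutine and the relevant-window discovery mechanism in the sparse subroutine, both applied to the fixed $\mu^*$.

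The main obstacle in this assembly is step~(c): for each $\cA$-interval $I_\cA$ I need to certify that either some sparse sample from $I_\cA$ lands in a $\tau$-neighbor of its $\mu^*$-image (so that the low-skew property of $\mu^*$ forces the correct $\cB^{\tau}$-interval to be declared relevant and probed during the recursion on $I_\cA$), or the residual $\eps$-fraction of edges in $I_\cA$ can be safely redirected to $\perp$ at only $(1+O(\eps))$-factor cost. This is precisely the case split inside Lemma~\ref{lem:approx}. Once (a), (b), (c) are combined, we obtain $\ED(A,B) \le \ED(\cE) \le (7+o(1))\ED(A,B)$, which together with the preprocessing and query-time bounds above proves Theorem~\ref{thm:large-edit}.
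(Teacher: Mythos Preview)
Your proposal is correct and follows essentially the same approach as the paper: the paper proves Theorem~\ref{thm:large-edit} simply by combining the preprocessing analysis of Subsection~\ref{subsec:algo}, the query-time bound of Lemma~\ref{lem:timequery}, the approximation guarantee of Lemma~\ref{lem:approx}, and the DP of Lemma~\ref{lem:DP}, exactly as you outline. Your steps~(a)--(c) faithfully unpack the chain Lemma~\ref{lem:windowComptabile} $\to$ Lemma~\ref{lemma:low-skew} $\to$ Lemma~\ref{lem:approx} that the paper uses (the paper even writes ``We therefore have the following theorem'' immediately after Lemma~\ref{lem:approx}), so there is no substantive difference.
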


\section{No preprocessing: $3+o(1)$-approx in $n^{1.6+o(1)}$ time}
In this section we introduce our $(3+o(1))$-approximation algorithm for edit distance that runs in time $n^{1.6+o(1)}$ without preprocessing.
At a high level, it is similar to other recent traingle-inequality based approximation algorithms for edit distance. In particular, the previous state of the art algorithm by Andoni~\cite{Andoni19} obtains a similar result when the edit distance is large (near-linear), but we can give an overall faster algorithm using the sublinear algorithm for small edit distance  with preprocessing (Section~\ref{sec:small}). The preprocessing cost is negligble when we apply it once to each window, and use the sublinear algorithm to compute the distances of many pairs.

\begin{theorem}[Approximate edit distance without preprocessing]\label{thm:3+eps} \hfill

Given two strings $A, B \in \Sigma^n$, we can approximate $\ED(A, B)$ within $3+o(1)$ approximation in $\tilde{O_{\eps}}(n^{1.6+o(1)})$ time with probability at least $1-\frac{1}{n}$.
\end{theorem}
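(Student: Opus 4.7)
The plan is to adapt the window-based sparsification of Section~\ref{sec:approx} so that it runs without the expensive $\tilde{O}(n^2)$ preprocessing phase, while simultaneously tightening the approximation factor from $7+o(1)$ down to $3+o(1)$. I would decompose $A$ and $B$ into windows of width $d$ (to be chosen) and apply the reduction of Lemma~\ref{lem:DP} exactly as before: the task becomes that of producing an overestimate $\cE$ of pairwise window edit distances along some near-optimal low-skew monotone mapping. The crucial enabler is that Algorithm~\ref{alg:hash} preprocesses each string in isolation: preprocessing a single window of length $O(d)$ takes only $O(d \log d)$ time, so all windows in $\cA$ and in $\cup_\tau \cB^\tau$ can be preprocessed in total time $\tilde{O}(nd/\eps) = \tilde{O}(n^{1.4+o(1)})$ for the eventual choice $d = n^{2/5}$. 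After this step, any query $\ED(a,b) \leq \tau d$ on a pair of length-$O(d)$ windows is answered exactly in time $\tilde{O}((\tau d)^2)$ via Theorem~\ref{thm:small-edit}, which is strictly below $\tilde{O}(d^2)$.

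I would then run the dense/sparse sparsification of Subsection~\ref{subsec:algo} with two modifications. First, because the close-window graphs $G_{\cA,\tau}$ and $G_{\cB^\tau,\tau}$ are no longer precomputed, I would construct only the portions that the algorithm actually consults, issuing window-pair $\ED$-queries on demand. Second, in the dense subroutine I would shrink the covered region around a seed $(a,b)$ with $\ED(a,b) \leq \tau d$ from $\cN^{\cA,2\tau}(a) \times \cN^{\cB^\tau,4\tau}(b)$ down to $\cN^{\cA,\tau}(a) \times \cN^{\cB^\tau,\tau}(b)$, setting $\cE(a',b') \leftarrow 3\tau d$ on this product. Triangle inequality gives
\[
\ED(a',b') \;\leq\; \ED(a',a) + \ED(a,b) + \ED(b,b') \;\leq\; 3\tau d,
\]
and since the thresholds $\tau$ form a geometric grid, any optimal matching pair $(a',\mu(a'))$ with true edit distance $d^*$ is covered at the scale $\tau \approx d^*/3$ with ratio $\cE/\ED \leq 3+\eps$. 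Summed over the $O(\log n / \eps)$ scales, the DP of Lemma~\ref{lem:DP} then returns a $(3+o(1))$-approximation.

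Balancing parameters with $d = n^{2/5}$, the sparsification contributes $\tilde{O}(\ttau^{4/3+\eps})$ queries per scale $\tau$ (as in Lemma~\ref{lem:timequery}), each costing $\tilde{O}((\tau d)^2)$; the product
\[
\tilde{O}\left( \ttau^{4/3} (\tau d)^{2} \right) \;=\; \tilde{O}\left( n^{4/3} (\tau d)^{2/3} \right) \;\leq\; \tilde{O}(n^{4/3} d^{2/3}) \;=\; \tilde{O}(n^{1.6})
\]
bounds the per-scale cost, so the overall query time is $\tilde{O}_\eps(n^{1.6+o(1)})$, dominating both preprocessing and the final DP.

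The hardest part will be the dense-subroutine coverage analysis under the tighter neighborhood $\cN^{\cA,\tau}(a)$. In Section~\ref{sec:approx}, two dense windows $a,a'$ sharing a common $\tau$-neighbor in $\cB^\tau$ automatically covered each other because $\ED(a,a') \leq 2\tau$ placed $a'$ inside $\cN^{\cA,2\tau}(a)$, so only $\tilde{O}(\ttau^{2/3})$ dense runs sufficed. With the smaller $\cN^{\cA,\tau}(a)$ this automatic absorption is lost and some clusters might need multiple runs. I would recover the $\tilde{O}(\ttau^{2/3+\eps})$ bound on dense runs by exploiting the geometric grid of thresholds: any pair $(a,a')$ with $\ED(a,a') \in (\tau, 2\tau]$ is absorbed at the next coarser scale $(1+\eps)\tau$, and the fact that each pair is charged at only $O(1/\eps)$ scales preserves the $\tilde{O}(\ttau^{4/3+\eps})$ per-scale query budget needed above.
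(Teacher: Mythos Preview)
Your proposal has two genuine gaps, both in the dense subroutine.

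\textbf{Coverage fails.} With the shrunk product $\cN^{\cA,\tau}(a)\times\cN^{\cB^\tau,\tau}(b)$ you correctly get $\cE\le 3\tau d$ on those pairs, but you have not actually covered $a'\in\cN^{\cA,\tau}(a)$. Covering $a'$ means every $b''\in\cN^{\cB^\tau,\tau}(a')$ receives an estimate; yet the triangle inequality only gives $\ED(b,b'')\le\ED(b,a)+\ED(a,a')+\ED(a',b'')\le 3\tau d$, so $b''$ need not lie in $\cN^{\cB^\tau,\tau}(b)$. Your geometric-grid remark does not rescue this: at a coarser scale the seed is a different window, and there is no mechanism forcing $\mu(a')$ to land in that seed's shrunk $\cB$-neighborhood either. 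If instead you enlarge the $\cB$-side to $\cN^{\cB^\tau,3\tau}(b)$ to restore coverage, the triangle inequality degrades to $5\tau$, not $3\tau$. The paper avoids this by centering the dense subroutine on a single $\cB$-window $b$ (not on a pair $(a,b)$): it uses $\cN^{\cA,\tau}(b)\times\cN^{\cB^\tau,2\tau}(b)$, so only \emph{one} application of the triangle inequality through $b$ is needed, giving both $\cE\le 3\tau d$ and $\cN^{\cB^\tau,\tau}(a')\subseteq\cN^{\cB^\tau,2\tau}(b)$ simultaneously.

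\textbf{Query count fails.} The $\tilde O(\ttau^{4/3})$ bound you import from Lemma~\ref{lem:timequery} relies essentially on the precomputed close-window graphs: in Section~\ref{sec:approx} each dense run spends only $\tilde O(\ttau^{2/3})$ queries because $\cN^{\cA,2\tau}(a)$ and $\cN^{\cB^\tau,4\tau}(b)$ are read off for free from preprocessing. ``Issuing window-pair $\ED$-queries on demand'' for these neighborhoods costs $\Theta(t+\ttau)$ per dense run, and with $\tilde O(\ttau^{2/3})$ dense runs the total becomes $\tilde O(\ttau^{5/3})$, which with $d=n^{2/5}$ gives $n^{5/3}d^{1/3}=n^{1.8}$, not $n^{1.6}$. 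The paper instead accepts a $\tilde O(\ttau^{3/2})$ query budget, compensates by taking $d=n^{1/5}$, and controls the dense cost via an iterative $\cB$-side sparsification (Step~B-$g$): it repeatedly finds high-degree $\cB$-windows, pays $\Theta(t+\ttau)$ to enumerate their neighborhoods, and shows that after $O(1/\eps)$ rounds only $\ttau^{1/2}$ dense $\cA$-windows remain.
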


\paragraph{High level idea} The algorithm enumerates over various thresholds $\tau$. For each value of $\tau$, the algorithm first marks all the $\cA$-windows as  $\tau$-uncovered. Then, it uses sampling to estimate the degree of each $\cA$ window, and classifies them as sparse or dense. It handles sparse windows similarly to Section~\ref{sec:approx}. As for the dense windows, if there are few of them, it exhaustively finds their $(\cB,\tau)$-neighbors. Otherwise, it sparsifies the set of uncovered dense windows as follows. It enumerates over the set of $\cB$ windows: For each such a window $b$ it estimates its degree with respect to uncovered dense windows. If the degree is large, then it computes $\cN^{\cA,2\tau}(b), \cN^{\cB,\tau}(b)$, marks that the relative distance between pairs in $\cN^{\cA,2\tau}(b)\times \cN^{\cB,\tau}(b)$ as upper bounded by $3\tau$. It then moves each uncovered dense window in $\cN^{\cA,2\tau}(b)$ to the set of covered windows. In such a way, since we remove the neighborhood of dense $\cB$-windows, we show that the number of uncovered dense $\cA$-windows decreases significantly. We recurse on the sparsification phase; each iteration uses a smaller degree threshold for dense $\cB$-windows and handles fewer remaining uncovered dense $\cA$-windows.

\paragraph{Similarities to Section~\ref{sec:approx} Algorithm}
Similar to Section~\ref{sec:approx} and other recent approximation algorithms, we partition the input strings into windows, and consider the close-window graph where two windows share an edge if they are close in edit distance. We handle high-degree (``dense'') windows using triangle inequality, and low-degree (``sparse'') by iteratively focusing on narrowing intervals. 

\paragraph{Main technical difference compared to Section~\ref{sec:approx} Algorithm}
A subtle technicality of this algorithm is that in the sparsification phase, we can remove $\cB$-windows of high degree, and all their $\cA$-neighbors. This suffices to ensure that the remaining $\cA$-windows are sparse {\em on average}. However, the analysis of the sparse case, crucially relies on {\em every window} being sparse. By Markov's inequality, once we decrease the average degree of the $\cA$-window at most $t^{-\eps}$-fraction of them remain overly-dense. We can thus recurse on all the $\cB$-windows and the $t^{-\eps}$-fraction overly-dense $\cA$-windows, again removing the highest-degree $\cB$-windows. After $O(1/\eps)$ iterations, all the dense $\cA$-windows have been removed.\\  

As in Section~\ref{sec:approx}, we repeat the following steps for every $\tau$ in a multiplicative-$(1+\eps)$-net.

\subsubsection*{Parameters and notation}
Following the notation of Section~\ref{sub:windows}, we set the base window length to $d = n^{0.2}$, and the number of $\cA$-windows is $t = n^{0.8}$; the number of windows in $\cBt$ is $\ttau = O_{\eps}(t/\tau)$.
Our algorithm will use $\ttau^{3/2+o(1)}$ queries, each in time $\tilde{O}(d^2 \tau^2)$, as well as the DP from Lemma~\ref{lem:DP}. Hence the total running time is given by 
\begin{gather}\label{eq:3-time}
\tilde{O}\Big(\ttau^{3/2+o(1)} \cdot d^2 \tau^2 + \frac{n^2}{d^2}\Big) = \tilde{O}(n^{1.6+o(1)}).
\end{gather}

Our sparsification phase (Steps A-2 and B below) works in iterations, where in each iteration we cover the edges of the form $(a,b)$ where $b$ is a high degree vertex.
In more detail, the algorithm iteratively identifies $\cB$-windows with high degree. At the first iteration, the degree threshold is $\deg_1 := \ttau^{1/2}$, and it decreases by $\ttau^{\eps}$ in each subsequent iteration. I.e. at the  $g$-th iteration it is $\deg_g := \ttau^{1/2-(g-1)\eps}$.

We maintain a partition of $\cA$ into three subsets: $\cA = \cAs \cup \cAb \cup \cAc$. 
Initially, $|\cAb|\leq |\cA| = t \leq \ttau$.
In each iteration of the sparsification phase, windows from $\cAb$ are moved to $\cAc$.  The upper bound on $|\cAb|$ decreases by $\ttau^{\eps}$-factor in each iteration.



\subsubsection*{Step A: Estimating density of $\cA$-windows}
For each $a \in \cA$, we sample $\ttau^{1/2-\eps}\log^2(n)$ $\cBt$-windows  $b$ at random and query $\ED(b,a)$.
We place $a$ in $\cAb$ if at least $\frac{1}{2}\log^2(n)$ of the samples are within edit distance $\tau$. Otherwise, we place it in $\cAs$ and ignore it until Step C of the algorithm.

\paragraph{Complexity:}
We spend $\tilde{O}(\ttau^{1/2})$ queries for each $a \in \cAb$, hence a total of $\tilde{O}(\ttau^{3/2})$. 

\subsubsection*{Step B-$g$. An iteration of the sparsification phase}
In each iteration of the sparsification phase, we enumerate over the $\cBt$-windows. For each window $b$ that has not already been marked {\em dense} in previous iterations, we sample $\frac{|\cAb| \log^2(n)}{\deg_g}$ $\cAb$-windows  $a$ at random and query $\ED(b,a)$.
We say that $b$ is {\em dense} if at least $\frac{1}{2}\log^2(n)$ of the samples are within edit distance $\tau$.

If $b$ is dense, we query its entire $\cN^{\cAb,\tau}(b),\cN^{\cBt,2\tau}(b)$ neighborhoods.
We (implicitly) add edges with cost $3\tau$ for every pair in $\cN^{\cAb,\tau}(b) \times \cN^{\cBt,2\tau}(b)$, 
and move the windows in  $\cN^{\cAb,\tau}(b)$ to $\cAc$.

If the number of $\cAb$ windows becomes at most $\ttau^{1/2}$ at any point, we exhaustively find all their neighbors in $\cBt$ and move them to $\cAc$.

\paragraph{Approximation}
By triangle inequality, every pair of windows in $\cN^{\cAb,\tau}(b) \times \cN^{\cBt,2\tau}(b)$ has edit distance at most $3\tau$.
Notice also that by triangle inequality $\cN^{\cBt,\tau}(\cN^{\cAb,\tau}(b)) \subseteq \cN^{\cBt,2\tau}(b)$, i.e. we have discovered all the $(\cBt,\tau)$-neighbors of all the $\cAc$-windows. 

\paragraph{Complexity:}
We maintain the bound that at the beginning of the $g$-th iteration, $|\cAb| = O(\ttau^{1-(g-1)}) = O(\ttau^{1/2} \deg_g)$.
Hence, similarly to Step A, we spend $\tilde{O}(\ttau^{1/2})$ queries for estimating the degree of each $b\in \cBt$, for a total of $\tilde{O}(\ttau^{3/2})$. 

Every time we discover a dense $b$, we query its edit distance to $\leq t+\ttau$ windows, and decrease by $\Omega(\deg_{g})$ the number of remaining $\cAb$-windows. Recall that we start the $g$-th iteration with at most $O(\ttau^{1-(g-1)\eps}) =O(\ttau^{1/2}\deg_{g})$ $\cAb$-windows.
Hence in total this step requires $O((t+\ttau)\cdot  \ttau^{1/2})=O(\ttau^{3/2})$ queries.

\subsubsection*{The sparsification phase: iterating over Step B-$g$}
We iteratively apply Step B-$g$ $O(1/\eps)$ times.
At the end of the $g$-th iteration, every remaining $\cBt$-window has at most $\deg_g = \ttau^{1/2-(g-1)\eps}$ remaining $(\cAb,\tau)$-neighbors.
Hence the total number of $\tau$-close pairs in $\cBt \times \cAb$ is $\ttau^{3/2-(g-1)\eps}$. 
Since every $\cAb$ window has $\Omega(\ttau^{1/2+\eps})$ $(\cBt,\tau)$-neighbors%
\footnote{Notice that the number of remaining neighbors for $a \in \cAb$ does not change during the run of the sparsification phase, since once any of $a$'s neighbors is declared {\em dense}, we move $a$ to $\cAc$.}, we have that $|\cAb| = O(\ttau^{1-(g)\eps})$.

\subsubsection*{Step C. Sparse windows}



	

We process the $\cAs$-windows as in the sparse case in Section~\ref{sec:approx} (for completeness, we spell out the details below).
This algorithm is somewhat simpler than Section~\ref{sec:approx} since we already determined in advance which windows are sparse and which are dense.

\paragraph{Intervals (first iteration):}
Consider a partition of $[n]$ into $\ttau^{1/2+2\eps}$ contiguous intervals of length $n/\ttau^{1/2+2\eps} \leq \ttau^{1/2-2\eps} \cdot d$. For $A$ we define the {\em $\cA$-interval} $I_{\cA}$ corresponding to interval $I \subset [n]$ as the set of $\leq \ttau^{1/2-2\eps}$ windows with indices in $I$. Therefore, for $A$-windows they are either entirely contained in the interval or don't intersect it.
For $B$, we let $I/\eps$ denote a $1/\eps$-factor expansion of $I$ (i.e. the interval of length $|I|/\eps$ centered at $I$)%
\footnote{For example, if $I = [20,30]$ then its $3$-expansion is $[10,40]$.}. We define the $\cB^{\tau}$-interval $I_{\cB^{\tau}}$ to be the set of windows that intersect $I/\eps$. When clear from context we sometimes just call $I_{\cA},I_{\cB^{\tau}}$ intervals.

\paragraph{Sparse subroutine (first iteration):}
For each interval $I_\cA$, if at most $\log^2(n)$ of its windows are sparse, we simply query their entire $(\cB^{\tau},\tau)$-neighborhoods.
Otherwise, we sample a random set $S(I_\cA)$ of $\log^2(n)$ windows from $I_\cA \cap \cAs$. 
For every window in $S(I_\cA)$, we query its entire $(\cB^{\tau},\tau)$-neighborhood using $\ttau$ queries. 
For each interval $I_{\cA}$ we record the union of all $\tilde{O}(\ttau^{1/2+\eps})$ intervals $\widehat{I_{\cB^{\tau}}}$ that contain any $(\cB^{\tau},\tau)$-neighbors of any of the sparse samples $a\in S(I_\cA)$. We call these $\cB$-windows the {\em relevant windows} for the windows in $I_{\cA}$.
We henceforth no longer look to match windows from $I_{\cA}$  to irrelevant $\cB$-windows. 
Note that in a low-skew mapping, if at least one of the samples is matched, then windows in $I_{\cA}$ cannot be mapped to any irrelevant $\cB$-windows under that mapping.

\paragraph{Approximation (first iteration):}
Recall that by Lemma~\ref{lem:windowComptabile} and Lemma~\ref{lemma:low-skew}, there is a low-skew monotone mapping that approximates the optimal transformation to within $(1+O(\eps))$-factor. 
For any low-skew monotone mapping $\mu$, the entire interval $I_{\cA}$  is mapped to a single $\cB^{\tau}$-interval $I_{\cB^{\tau}}$.
Suppose that $(1-\eps)$-fraction of the sparse windows in $I_{\cA}$ are mapped to $\cB^{\tau}$-windows (or $\perp$) of distance greater than $\tau$.
Then we can safely discard the $\tau$-edges for the remaining $\eps$-fraction of sparse windows with negligible loss in approximation factor. 
Hence in total we pay only $(1+O(\eps))$-factor in approximation for sparse windows.
Otherwise, w.h.p. at least one of the samples has a $(\cB^{\tau},\tau)$-neighbor in $I_{\cB^{\tau}}$. 

\paragraph{Complexity (first iteration):}
Each sparse $\cA$-window has only $\tilde{O}\Big(\ttau^{1/2+\eps} \cdot \ttau^{1/2-2\eps}\Big)= \tilde{O}(\ttau^{1-\eps})$ relevant windows.
Since there are $\ttau^{1/2+2\eps}$ $\cA$-intervals, we spend use a total of $\tilde{O}(\ttau^{3/2+\eps})$ queries.


\paragraph{Recursion}
We recurse on the sparse subroutine, with the following modifications for the $\ell$-th iteration.
\begin{itemize}
\item We increase the number of intervals to   $\ttau^{1/2+(\ell+2)\eps}$, and their size decreases accordingly to $\tilde{O}(\ttau^{1/2-(\ell+2)\eps})$.
\item For each sparse sample, we only compute the restriction of its $(\cB^{\tau},\tau)$-neighborhood to relevant windows. Hence we only spend $\tilde{O}\Big(\ttau^{1-\ell\eps}\Big)$ queries for each sample, or a total of $\tilde{O}\Big(\ttau^{3/2+\eps}\Big)$ queries across all intervals.
\item The relevant windows for the next level of recursion are a (strict) subset of the relevant windows in the current level.
\end{itemize}
The recursion continues until each interval has less than $\log^2(n)$ sparse windows, 
after which we can simply query the distance of every remaining sparse window to all its relevant $\cBt$-windows.

\subsubsection*{Completing the proof of Theorem~\ref{thm:3+eps}}
As we argued above, the algorithm finds a $(3+\eps)$-approximation using $\tilde{O}(\ttau^{3/2+\eps})$ queries. 
Taking $\eps$ to be slightly sub-constant completes the proof of Theorem~\ref{thm:3+eps}. \qed

\section{Hardness}\label{sec:hardness}

In this section we formalize, in the context of (approximate) edit distance, the folklore intuition (based on~\cite{WW18}) that polynomial preprocessing can not circumvent fine-grained complexity lower bounds.
In Subsection~\ref{sub:SETH} we show that known fine-grained complexity hardness results for exact edit distance and related problems extend to accommodate polynomial preprocessing.

In Subsection~\ref{sub:Hardness-of-Approx} we consider the problem of edit distance approximation. There are essentially no conditional hardness results for this problem, and in fact recent work obtained a truly-subquadratic constant factor approximation algorithm~\cite{CDGKS18}. Improving this factor, and in particular obtaining a truly-subquadratic $(1+\eps)$-approximation factor, is perhaps the most important open problem in this area. There are evidences that providing $1+o(1)$-factor approximation might be hard, as it implies new circuit lower bounds~\cite{AB17}.
Theorem~\ref{thm:Hardness-of-Approx} shows that essentially any approximation factor that is obtained with polynomial preprocessing can also be obtained without it. 
Note that this holds unconditionally, even if (BP)-SETH is false.

\subsection{SETH-hardness of exact string alignment with preprocessing}\label{sub:SETH}
The Strong Exponential Time Hypothesis is an (extreme) strengthening  of $\P\neq \NP$ postulating that $k$-SAT on $n$ variables requires $2^{(1-\delta_k)n}$ time. Building on~\cite{AHWW16}, we can prove our hardness based on the milder BP-SETH which replaces $k$-CNF with a branching program:

\begin{hypothesis}[BP-SETH]
Given a branching program over $n$ variables of width $W$ and length $T$ such that $T^W = 2^{o(n)}$, deciding whether it has a satisfying assignment requires time $2^{(1-o(1))n}$  time.
\end{hypothesis}

\begin{theorem}[BP-SETH hardness]\label{thm:SETH} \hfill

Unless BP-SETH is false, there is no algorithm that preprocesses two input strings in polynomial time and then computes their (edit distance / longest common subsequence / dynamic time warping) in truly-subquadratic time.
\end{theorem}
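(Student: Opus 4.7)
My plan is to observe that the standard SETH and BP-SETH reductions for $\ED$, $\LCS$, and $\mathrm{DTW}$~\cite{BK15,BI18,AHWW16} produce pairs of instances with a \emph{bipartite} structure that the preprocessing model cannot exploit: one of the two output strings depends only on the first half of the SAT/BP variables and the other only on the second half. Since this is precisely the freedom the preprocessing adversary already has, the known near-quadratic lower bounds should transfer to the preprocessing model essentially for free, with only a small amount of accounting for the extra preprocessing overhead.

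\textbf{Main argument.} Concretely, I would start from the AHWW reduction~\cite{AHWW16}: given a branching program $P$ on $n$ variables with $T^W = 2^{o(n)}$, partition the variables into two halves and construct strings $A,B$ of length $N = 2^{n/2}\cdot\mathrm{poly}(n)$ such that $\ED(A,B)$ encodes whether $P$ has a satisfying assignment, with $A$ a function only of the first half of the variables and $B$ a function only of the second. Suppose, toward contradiction, that an algorithm $\cA$ preprocesses each length-$N$ string in time $O(N^{2-\delta})$ for some $\delta>0$ and then answers an edit-distance query in time $O(N^{2-\eps})$ for some $\eps>0$. Applying $\cA$ to the output of the reduction then yields a BP-SAT algorithm whose running time is at most
\[
O(N) + 2\cdot O(N^{2-\delta}) + O(N^{2-\eps}) \;=\; O\bigl(2^{(1-\min(\delta,\eps)/2)\cdot n}\bigr),
\]
contradicting BP-SETH.

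\textbf{Extensions and main obstacle.} The same argument will carry over unchanged to $\LCS$ via the BP-SETH reduction of~\cite{BI18,AHWW16} and to $\mathrm{DTW}$ via the SETH reduction of~\cite{BK15}, since each of those reductions also produces a pair of strings whose construction respects the bipartition of the variables. The main (minor) obstacle is formalizing what ``polynomial preprocessing'' buys: under the natural reading that a useful preprocessing must itself run in time $o(N^2)$, the argument above is complete. If instead one wishes to allow preprocessing time $N^c$ for arbitrarily large constants $c$, the theorem becomes essentially vacuous in that regime (the preprocessing alone costs at least $N^2$, so the putative algorithm provides no improvement over the trivial dynamic program even on a single query), and so no further work is required on that front; the bipartite structure of the AHWW reduction is enough to upgrade the classical quadratic lower bound to one that tolerates independent polynomial preprocessing of each of the two input strings.
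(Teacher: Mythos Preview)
Your argument handles only the case where the preprocessing itself runs in truly subquadratic time $O(N^{2-\delta})$. The theorem, however, allows \emph{arbitrary polynomial} preprocessing $O(N^{c})$ for any constant $c$, and your dismissal of the regime $c\ge 2$ as ``essentially vacuous'' is the gap.

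That regime is not vacuous. The entire motivation for the preprocessing model (and indeed the paper's own $\tilde O(n^2)$-preprocessing, $n^{1.5+o(1)}$-query algorithm) is that the preprocessing cost is \emph{amortized over many queries}: given a pool of $m$ strings, one pays $m\cdot N^{c}$ once and then $\binom{m}{2}$ cheap queries. A hypothetical algorithm with $N^{100}$ preprocessing and $N^{1.9}$ query time would be genuinely useful in this setting and would \emph{not} be ruled out by your argument, since applying it to a single pair $(A,B)$ costs $N^{100}\gg 2^{n}$ and yields no contradiction with BP-SETH.

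The missing idea is precisely the amortization trick of~\cite{WW18} that the paper uses: instead of building one pair of long strings of length $N\approx 2^{n/2}$, partition each side's half-assignments into $2^{(1-1/c)n/2}$ groups of size $2^{n/(2c)}$ and build one alignment-gadget string per group, each of length $\tilde O(N^{1/c})$. Now preprocessing all $\tilde O(N^{1-1/c})$ short strings costs $\tilde O\big(N^{1-1/c}\cdot (N^{1/c})^{c}\big)=\tilde O(N^{2-1/c})$, and running the $O(N^{2-\eps})$-query algorithm on all $N^{2-2/c}$ pairs costs $O\big(N^{2-2/c}\cdot (N^{1/c})^{2-\eps}\big)=O(N^{2-\eps/c})$. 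Both terms are truly sub-$2^{n}$, contradicting BP-SETH. Your observation about the bipartite structure of the AHWW gadgets is correct and is exactly what makes each short string preprocessable in isolation; what is missing is this many-small-instances decomposition that lets you amortize away an arbitrarily large polynomial preprocessing exponent.
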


\begin{remark}
We remark that unlike with $k$-SAT,  it is plausible that the brute-force algorithm for BP-SAT is optimal to within $\poly(n)$ factors, and in fact better algorithms would imply new circuit lower bounds (\cite{AHWW16} and references therein). Under a corresponding strengthening of BP-SETH one can show that string alignment with preprocessing requires $N^2/\polylog(N)$ time.
\end{remark}

The proof of Theorem~\ref{thm:SETH} builds on {\em alignment gadgets} and {\em normalized vector gadgets (NVG)} from previous works on SETH and BP-SETH hardness of string alignment~\cite{BI18,BK15,AHWW16}. Each NVG represents a half-assignment to the branching program, and the alignment gadgets define a composition of the NVGs into two long strings.
Here we deviate from typical SETH-hardness proofs of sequence similarity, and use a divide-and-conquer approach of~\cite{WW18} to construct two larger sets of shorter strings. This allows us to reuse the preprocessing of each shorter string when we compare every pair to look for a satisfying assignment (aka a satisfying pair of half-assignments).

Below we use $\dist()$ to refer to the distance under the relevant similarity measure (edit distance / longest common subsequence / dynamic time warping); for longest common subsequence we use the ``co-LCS'' (edit-distance-without-substitutions) distance $\dist(A,B) := n-\LCS(A,B)$.

\subsubsection*{Normalized Vector Gadgets}
Given a BP $\varphi$ of width $W$ and length $T$, {\em normalized vector gadgets ($\NVG$)} map half assignments $a,b \in \{0,1\}^{n/2}$ into strings such that:
\begin{gather*}
\dist(\NVG_A(a),\NVG_B(b)) = 
\begin{cases}
	c_T & \text{if assignment $(a \circ b)$ satisfies $\varphi$};\\
	c_F & \text{otherwise}
\end{cases},
\end{gather*}
where $c_T < c_F$ are integers that depend on $W,T$.

\begin{lemma}[Normalized Vector Gadgets~\cite{AHWW16}]\label{lem:NVG}\hfill

Given a BP of width $W$ and length $T$, we can construct NVGs of length $T^{O(\log(W)}))$ for all half assignments $a,b \in \{0,1\}^{n/2}$ in time $2^{n/2} \cdot T^{O(\log(W)}))$.
\end{lemma}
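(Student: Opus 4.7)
The plan is to construct the NVGs recursively, splitting the branching program along its length rather than along its variables. First I would set up notation: a half-assignment $a \in \{0,1\}^{n/2}$ to the first $n/2$ variables of $\varphi$ induces a reachability function $f^1_a : [W] \to [W]$, where $f^1_a(s)$ is the state reached from start state $s$ after reading $a$; similarly $b$ induces $f^2_b : [W] \to [W]$ for the second half. The full BP accepts $(a,b)$ iff $f^2_b(f^1_a(1)) = \text{accept}$.

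Second, I would invoke the alignment gadget framework of \cite{BI18,BK15,AHWW16}, which provides black-box OR- and AND-composition gadgets with explicit normalization constants: given $m$ sub-gadgets of length $\ell$, each with distances in $\{c_T,c_F\}$, the composed OR (respectively AND) gadget has length $O(m\ell)$ and its distance lies in one of two values depending solely on whether at least one (respectively all) sub-gadgets realize $c_T$. The core of the proof is a recursion on $T$: to build NVGs for $\varphi$, I would split $\varphi$ into two sub-BPs $\varphi_1,\varphi_2$ of length $T/2$ and width $W$. For each intermediate state $m \in [W]$ I recursively construct, for $\varphi_1$, an NVG for the statement ``$f^1_a(1) = m$,'' and for $\varphi_2$ an NVG for ``$f^2_b(m) = \text{accept}$.'' The final NVG for $a$ (respectively $b$) is obtained by concatenating, over all $m \in [W]$, these sub-gadgets in a way compatible with an AND-over-halves inside an OR-over-$m$, where the ``OR across $m$'' is realized on one side by the standard trick of enumerating all $m$'s and padding with the OR-gadget's separators, and similarly on the other side.

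Third, I would solve the size recurrence. Writing $L(T,W)$ for the gadget length, the recursion yields $L(T,W) \le c \cdot W \cdot L(T/2, W)$ with base $L(1,W) = \mathrm{poly}(W)$, so unrolling over $\log_2 T$ levels gives
\[
L(T,W) \le (cW)^{\log_2 T} \cdot \mathrm{poly}(W) \;=\; T^{O(\log W)}.
\]
The construction time per NVG is dominated by its length, so building all $2^{n/2}$ NVGs takes $2^{n/2} \cdot T^{O(\log W)}$ as claimed. Normalization (two exact distance values $c_T < c_F$) follows because the OR/AND alignment gadgets of \cite{BI18,BK15} are themselves normalized, and compositions preserve normalization provided the two distance values at each level are tracked consistently; the final $c_T,c_F$ are explicit integer functions of $T,W$ determined by summing the contributions up the recursion tree.

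The main obstacle I anticipate is the bookkeeping of normalization under composition: I need the inductive NVGs for ``$f^1_a(1) = m$'' and for ``$f^2_b(m) = \text{accept}$'' to output distances from the \emph{same} two-value set so that the OR/AND gadgets can be applied, and I need to verify that the split at length $T/2$ truly factorizes the BP computation through a single intermediate state $m$ (this uses only that the BP is layered). Once that is arranged and the OR/AND compositions of \cite{BI18,BK15,AHWW16} are invoked as a black box, the size and time bounds follow immediately from the recurrence above.
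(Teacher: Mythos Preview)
The paper does not give its own proof of this lemma; it is imported verbatim from \cite{AHWW16} and used as a black box. Your reconstruction is faithful to the argument in that reference: recursively bisect the BP along its length, take an OR over the $W$ possible midpoint states and an AND over the two halves, and solve the resulting size recurrence $L(T,W)\le cW\cdot L(T/2,W)$ to get $T^{O(\log W)}$.

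One refinement worth flagging: your setup ``$f^1_a:[W]\to[W]$ is the state reached after reading $a$'' implicitly assumes the BP reads all $a$-variables before all $b$-variables. A general BP need not do this, so in the actual \cite{AHWW16} construction the recursion is parameterized by a start/end state \emph{pair} $(s,m)\in[W]^2$ at every level, and at a leaf (a single BP step reading variable $x_i$) the gadget hard-codes the bit if $x_i$ belongs to your half and leaves a wildcard otherwise. This changes nothing in the asymptotics --- the extra $W$ factor is absorbed into the $T^{O(\log W)}$ bound --- but it is what makes the construction go through for BPs that interleave the two halves of the variable set. With that adjustment, your outline and the normalization bookkeeping you describe are exactly the content of the cited lemma.
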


\subsubsection*{Alignment Gadgets}
Consider two ordered sets of strings $\cA, \cB$ of cardinalities $n_A < n_B$, respectively.
An alignment $\mu$ is a monotone partial mapping from $\cA$ to $\cB \cup \{\perp\}$.
An alignment $\mu$ is {\em structured} if it maps the $i$-th string in $\cA$ to the $i+\Delta$ string in $\cB$ for some fixed shift $\Delta$ and for all $a \in \cA$.

The $\acost$ of a mapping $\mu$ is defined by:
\begin{gather*}
\acost(\mu,\cA,\cB) := \sum_{a \in \cA} \dist(a,\mu(a)).
\end{gather*}
Here $\dist(a,\perp):= \max_{a'\in\cA,b\in\cB}\dist(a',b)$.


An {\em alignment gadget} is a mapping from $\cA,\cB$ into respective strings $\AG_A(\cA),\AG_B(\cB)$ such that for some parameter $c_{\GA}=c(n_A,n_B)$:
\begin{gather}\label{eq:GA}
\min_{\text{alignment $\mu$}}\acost(\mu,\cA,\cB) \leq \dist(\AG_A(\cA),\AG_B(\cB)) +c_{\GA}\leq \min_{\text{structured alignment $\mu$}} \acost(\mu,\cA,\cB).
\end{gather}

\begin{lemma}[Alignment gadgets~\cite{BK15}]\label{lem:GA} \hfill

Edit distance, LCS (with binary alphabet), and Dynamic Time Warping admit alignment gadgets that can be computed in linear time. 
\end{lemma}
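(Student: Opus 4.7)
The plan is to build the gadget strings by interleaving the inputs $a_i$ and $b_j$ with long ``separator'' blocks calibrated so that any near-optimal alignment must match separators to separators, thereby forcing a structured alignment. For edit distance over a non-binary alphabet, set $L := \max_{w \in \cA \cup \cB}|w|$, $M := 10\, n_B L$, and using a fresh symbol $\$$ let $S := \$^M$. Define
\[
\AG_A(\cA) := S\,a_1\,S\,a_2\,S \cdots S\,a_{n_A}\,S, \qquad \AG_B(\cB) := S\,b_1\,S\,b_2\,S \cdots S\,b_{n_B}\,S.
\]
Both are produced in linear time from the inputs.

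The key claim to verify is that in any alignment realizing $\dist(\AG_A(\cA),\AG_B(\cB))$, the $(n_A+1)$ separator blocks of $\AG_A$ must be matched (up to at most $O(n_B L)$ edits) to $(n_A+1)$ of the $(n_B+1)$ separator blocks of $\AG_B$, in order. The reason is that misaligning a separator block against a content block would cost $\Omega(M)$ in substitutions/insertions, dwarfing the total possible content cost $O(n_B L)$. Once separator-to-separator matching is established, the choice of which $S$-block of $\AG_B$ hosts the $i$-th $S$-block of $\AG_A$ is exactly a shift $\Delta \in \{0,\dots,n_B-n_A\}$, and the residual edit cost is $\sum_i \dist(a_i,b_{i+\Delta})$ plus the cost of deleting unmatched $b$-blocks. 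Absorbing the latter fixed overhead into $c_{\GA}$ yields both inequalities of \eqref{eq:GA}.

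For binary LCS and DTW the template is identical but the separator must be encodable over the restricted alphabet. For LCS, I would re-encode each content character $0,1$ as $00,11$, and use the pattern $(01)^M$, which never appears inside encoded content, as the separator; then any long LCS forcing alignment between the separator and encoded content gains only $O(L)$ matched characters per content block (its total contribution $O(n_B L)$ being negligible against $M$). For DTW, I would use a real-valued separator whose value is far from every content value, so that warping a separator symbol onto content costs $\Omega(1)$ per step and $\Omega(M)$ overall per misalignment.

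The main obstacle will be the ``separators-match-separators'' argument in the binary LCS case, where a long LCS could in principle harvest many scattered matches from $(01)^M$ against arbitrary binary content. Pushing the proof through requires a careful counting lemma: for any encoded content block of length $O(L)$, only $O(L)$ symbols of $(01)^M$ can simultaneously contribute to an LCS that crosses into content, so the total ``leakage'' is $O(n_B L) \ll M$. Given this lemma, the same shift-extraction argument completes the proof, and the linear-time bound follows since the construction is an explicit concatenation.
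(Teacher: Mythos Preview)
The paper does not prove this lemma; it is quoted as a black box from~\cite{BK15} (Bringmann--K\"unnemann). There is therefore no ``paper's own proof'' to compare against here.

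Your sketch captures the right high-level template---interleave the inputs with long, distinguishable separator blocks so that any cheap alignment must respect the block structure---and for edit distance over an unrestricted alphabet your fresh-symbol construction essentially works, provided you also normalize the $b_j$ to a common length (otherwise the ``delete unmatched $b$-blocks'' overhead you absorb into $c_{\GA}$ depends on the shift $\Delta$, and~\eqref{eq:GA} fails as stated). That is the easy case.

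The genuine content of the cited result, however, is making this work over a \emph{binary} alphabet, which is what the lemma asserts for LCS (and which~\cite{BK15} also achieves for edit distance). Your proposed encoding $0\mapsto 00$, $1\mapsto 11$ with separator $(01)^M$ is a natural first attempt, but the ``leakage'' lemma you flag as the main obstacle is exactly where the simple argument breaks down: the difficulty is not bounding how many characters of one separator can match one content block, but controlling a global LCS that may thread through many separators and content blocks simultaneously, harvesting partial matches from each in a way that decouples the gadget distance from any underlying alignment cost. The construction in~\cite{BK15} uses considerably more delicate gadgets (guarded blocks, carefully tuned padding on both sides) precisely to rule this out, and the proof is not short. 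Your DTW sketch is similarly plausible at the level of intuition but omits the actual work. So: right template for the easy case, but the binary-alphabet cases---which are the substance of the lemma---are not done, and your separator choice would need to be replaced by the~\cite{BK15} gadgets.
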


\subsubsection*{Completing the proof of BP-SETH-hardness}

\begin{proof}[Proof of Theorem~\ref{thm:SETH}]
Suppose that we have an algorithm that computes $\dist()$ for strings of length $N$ with preprocessing time $O(N^{\pretime})$ and query time $O(N^{2-\eps})$.
Given a BP over $n = 2\log_2(N)$ variables, we construct all its normalized vector gadgets in near-linear time as in Lemma~\ref{lem:NVG}. 

We partition the $\cA$-NVGs into $2^{(1-1/\pretime)n}$ subsets $\cA_1, \dots \cA_{2^{(1-1/\pretime)n}}$ of size $2^{n/\pretime}$ each (and likewise for $\cB$).
For each subset $\cA_i$, we construct its alignment gadget $A_i$ of size $\tilde{O}(N^{1/\pretime})$. 
For $\cB_j$, let $B_j$ be constructed by the alignment gadget for the set repeated twice.
If no pair of half-assignments corresponding to $\cA_i \times \cB_j$ satisfies the BP, then every pair of NVGs is at distance $c_F$, and by~\eqref{eq:GA} the distance of $A_i,B_j$ will be $d_F := 2^{n/\pretime}c_F - c_{\GA}$.
If there is a satisfying pair, then the structured alignment that matches the corresponding NVGs will have cost at most
\begin{gather*}
d_T := c_T + (2^{n/\pretime}-1)c_F - c_{\GA} < d_F.
\end{gather*}

We preprocess all the strings in total time $\tilde{O}\big(N^{1-1/\pretime}\cdot (N^{1/\pretime})^{\pretime}\big) = \tilde{O}(N^{2-1/\pretime})$.
Finally, we compute the distance between all $(N^{1-1/\pretime})^2$ pairs in time $O\big(N^{2-2/\pretime} \cdot (N^{1/\pretime})^{2-\eps}\big) = O\big(N^{2-2\eps/\pretime}\big)$.
The BP is satisfiable iff at least one of the pairs is at distance at most $d_T$.
\end{proof}

\subsection{Preprocessing doesn't help for approximate $\ED$ in truly-subquadratic time} \label{sub:Hardness-of-Approx}

\begin{theorem}[Hardness of Approximation]\label{thm:Hardness-of-Approx}\hfill

If there is an $\alpha$-approximation algorithm for edit distance that runs in polynomial preprocessing time and truly-subquadratic query time, then there is an $(\alpha+o(1))$-approximation algorithm that runs in truly-subquadratic time with no preprocessing.
\end{theorem}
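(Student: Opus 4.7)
The plan is to invoke the hypothesized algorithm $\cA_{\mathrm{pre}}$ (an $\alpha$-approximation for $\ED$ with preprocessing time $N^p$ and query time $N^{2-\eps}$) as a black box on the windows of the input strings, so that the polynomial preprocessing cost is paid once per \emph{small} window and then amortized over many cheap pairwise queries. Given $A,B\in\Sigma^N$, I would fix a slowly vanishing slack $\delta := 1/\log\log N$ and a window length $L=N^c$ to be tuned, then apply the window decomposition of Section~\ref{sub:windows} with parameter $\delta$; this produces $|\cA|=N/L$ windows of $A$ and $|\cB|=\tilde O(N/\delta^2)$ windows of $B$ across all thresholds $\tau$.

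Next I would preprocess every window independently with $\cA_{\mathrm{pre}}$, at total cost $\tilde O((|\cA|+|\cB|)\cdot L^p)=\tilde O(NL^p/\delta^2)$. Then, for each pair $(a,b)$ tracked by the DP of Lemma~\ref{lem:DP}, I would query $\cA_{\mathrm{pre}}$ to obtain an upper-bound estimate $\cE(a,b)$ satisfying $\ED(a,b)\le \cE(a,b)\le \alpha\cdot\ED(a,b)$; by the pair-count bound in Eq.~\eqref{eq:pairs}, only $\tilde O(N^2/L^2)$ such pairs arise per $\tau$, so the total query cost is $\tilde O(N^2/L^\eps)$ up to $\mathrm{poly}(1/\delta,\log N)$ overhead. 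Finally, I would invoke the DP of Lemma~\ref{lem:DP} on $\cE$ to output $\ED(\cE)$.

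For the approximation analysis, Lemma~\ref{lem:windowComptabile} produces a window-compatible mapping $\mu^*$ with $\ED(\mu^*)\le(1+O(\delta))\ED(A,B)$; since $\cE\le \alpha\cdot\ED$ pointwise and $\alpha\ge 1$, the $\cE$-cost of $\mu^*$ is at most $\alpha\cdot\ED(\mu^*)\le \alpha(1+O(\delta))\ED(A,B)$, while Lemma~\ref{lem:windowComptabile}(1) simultaneously lower-bounds $\ED(\cE)$ by $\ED(A,B)$. Hence $\ED(\cE)\in[\ED(A,B),(\alpha+o(1))\ED(A,B)]$ for any $\alpha$ bounded by a constant. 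Choosing $c$ so that preprocessing and query costs balance (roughly $L^{p+\eps}\approx N$) yields total runtime $\tilde O(N^{2-\Omega(\eps/(p+\eps))})$, which is truly subquadratic whenever $\eps>0$.

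The hard part will be calibrating $\delta$ so that the approximation factor degrades by only $o(1)$ while the runtime loses at most polylogarithmic overhead. The key point is that the number of $B$-windows produced by Section~\ref{sub:windows} grows only as $\mathrm{poly}(1/\delta)$ and the slack in Lemma~\ref{lem:windowComptabile} is linear in $\delta$, so setting $\delta=1/\log\log N$ is safe on both counts. No hardness assumption such as (BP)-SETH is needed, since the argument is a purely algorithmic, black-box use of $\cA_{\mathrm{pre}}$, matching the ``unconditional'' remark following the theorem statement.
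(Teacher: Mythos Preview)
Your proposal is correct and follows essentially the same approach as the paper: partition into windows, preprocess each window once with the black box, query the black box on the $O(N^2/L^2)$ pairs that the DP of Lemma~\ref{lem:DP} actually needs, and aggregate via Lemma~\ref{lem:windowComptabile}.

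The only noteworthy difference is in how the two proofs keep the number of $\cB$-windows under control. The paper restricts to a \emph{single} threshold $\tau=n^{-1/2\pretime}$, so it only has $\tilde O(n^{1-1/2\pretime})$ $\cB$-windows to preprocess; this lets it take window length $d=n^{1/\pretime}$, but it must then dispose of the small-$k$ regime separately via the exact $O(n+k^2)$ algorithm of~\cite{LMS98} (since that single $\tau$ is too coarse when $k\ll n^{1-1/2\pretime}$). You instead keep the full multi-$\tau$ window set of Section~\ref{sub:windows} (so $|\cB|=\tilde O(N)$) and compensate by choosing a smaller window length $L=N^{1/(p+\eps)}$ so that preprocessing all $\tilde O(N)$ windows still costs $\tilde O(N\cdot L^p)=\tilde O(N^{2-\eps/(p+\eps)})$. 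Your route avoids the case split and is arguably cleaner; the paper's route yields a slightly better exponent. Both are valid and unconditional.
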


The proof combines the divide-and-conquer steps from our approximate edit distance algorithm (Section~\ref{sec:approx}) with that of~\cite{WW18} (see also last step in the proof of Theorem~\ref{thm:SETH}).

\begin{proof}
Suppose that there exists an algorithm that computes an $\alpha$-approximation of edit distance using $O(n^{\pretime})$-preprocessing and $O(n^{2-\eps})$-query time. 
First, we assume wlog that the true edit distance is  $k = \omega(n^{1-1/2\pretime})$, otherwise we can solve the problem in time $O(n^{2-1/\pretime})$ using the algorithm of~\cite{LMS98}. In particular, we can henceforth neglect additive errors of $O(n^{1-1/2\pretime})$.

Using the notation of Section~\ref{sub:windows}, we decompose the strings into windows with base width $d := n^{1/\pretime}$. The $A$-windows have no overlap, and for the $B$-windows we consider $\cB^{\tau}$ for $\tau = n^{-1/2\pretime}$.
Hence we have
$\tilde{O}(n^{1-1/\pretime})$ $\cA$-windows and $\tilde{O}(n^{1-1/2\pretime})$ $\cB$-windows, all of length $O(n^{1/\pretime})$.

We preprocess all the windows in time $\tilde{O}\big(n^{1-1/2\pretime} \cdot (n^{1/\pretime})^{\pretime}\big) = \tilde{O}\big(n^{2-1/2\pretime}\big)$.

We then run the $\alpha$-approximate edit distance algorithm on pairs of windows. By the argument of~\cite{Ukkonen85}, it suffices to only compute the distances between pairs of windows whose starting points are within $\pm k$ far apart.
In particular for every $\cA$-window, we only need to compute the edit distance to $\tilde{O}(n^{1-1/\pretime})$ $\cB$-windows.
In total we spend $\tilde{O}\big((n^{1-1/\pretime})^2 \cdot (n^{1/\pretime})^{2-\eps}\big) = \tilde{O}\big(n^{2-\eps/\pretime}\big)$ time on this phase.

Given the $\alpha$-approximate window-window distance estimates, we aggregate them in time $\tilde{O}(n^{2-2/\pretime})$ using Lemma~\ref{lem:DP}.
Thus we obtain an $\alpha$-approximation to the optimal window-compatible matching, which by Lemma~\ref{lem:windowComptabile}, is an $(\alpha+o(1))$-approximation to the edit distance.
\end{proof}

\bibliographystyle{alpha}
\bibliography{references}

\end{document}